\DeclareMathOperator*{\argmax}{argmax}
\DeclareMathOperator*{\argmin}{argmin}
\def\bbC{{\mathbb{C}}}
\def\cA{{\mathcal{A}}}
\def\cC{{\mathcal{C}}}
\def\cF{{\mathcal{F}}}
\def\cN{{\mathcal{N}}}
\def\cP{{\mathcal{P}}}
\def\cS{{\mathcal{S}}}
\def\cT{{\mathcal{T}}}
\def\cZ{{\mathcal{Z}}}
\def\argmin{\mathop{\mathrm{argmin}}}
\def\argmax{\mathop{\mathrm{argmax}}}
\def\b{{\mathbf{\Sigma}}}
\def\b0{{\mathbf{0}}}
\def\ba{{\mathbf{a}}}
\def\bc{{\mathbf{c}}}
\def\bd{{\mathbf{d}}}
\def\bff{{\mathbf{f}}}
\def\bg{{\mathbf{g}}}
\def\bh{{\mathbf{h}}}
\def\bu{{\mathbf{u}}}
\def\bv{{\mathbf{v}}}
\def\by{{\mathbf{y}}}
\def\bz{{\mathbf{z}}}
\def\b0{{\mathbf{0}}}
\def\bA{{\mathbf{A}}}
\def\bF{{\mathbf{F}}}
\def\bH{{\mathbf{H}}}
\def\bZ{{\mathbf{Z}}}
\newcommand{\barr}{\begin{array}}
\newcommand{\earr}{\end{array}}
\newcommand{\bmat}{\left[\begin{array}}
\newcommand{\emat}{\end{array}\right]}
\newcommand{\bequ}{\begin{equation}}
\newcommand{\eequ}{\end{equation}}
\newcommand{\bet}{\begin{table}}
\newcommand{\eet}{\end{table}}
\newcommand{\btt}{\begin{tabular}}
\newcommand{\ett}{\end{tabular}}
\newcommand{\bec}{\begin{center}}
\newcommand{\eec}{\end{center}}
\newcommand{\bef}{\begin{figure}}
\newcommand{\eef}{\end{figure}}
\newcommand{\beq}{\begin{eqnarray}}
\newcommand{\eeq}{\end{eqnarray}}
\newcommand{\beqn}{\begin{eqnarray*}}
\newcommand{\eeqn}{\end{eqnarray*}}
\newcommand{\eeqs}{\end{eqnarray}\vspace*{-0.05in}}
\newcommand{\bit}{\begin{itemize}}
\newcommand{\eit}{\end{itemize}}
\newcommand{\bed}{\begin{description}}
\newcommand{\eed}{\end{description}}
\newcommand{\ben}{\begin{enumerate}}
\newcommand{\een}{\end{enumerate}}
\newcommand{\bis}{\vspace*{-0.05in}\begin{itemize}\small}
\newcommand{\eis}{\end{itemize}\normalsize\vspace*{-0.05in}}
\def\-{\! - \!}
\def\+{\! + \!}
\def\={\! = \!}
\def\>{\! > \!}
\def\Pr{{\mathrm{Prob}}}
\def\exp{{\mathrm{exp}}}
\newenvironment{proof}[1][Proof]{\begin{trivlist}
\item[\hskip \labelsep {\bfseries #1}]}{\end{trivlist}}
\newcommand{\C}{\mathbb{C}}%{{\mbox{\rm $\scriptscriptstyle ^\mid$\hspace{-0.40em}C}}} %o
\newcommand{\qed}{\nobreak \ifvmode \relax \else
      \ifdim\lastskip<1.5em \hskip-\lastskip
      \hskip1.5em plus0em minus0.5em \fi \nobreak
      \vrule height0.75em width0.5em depth0.25em\fi}
\def\C{{\mathbb C}}
                 \def\hell{\ell_{opt}}%{\hat{\ell}}
\def\Mt{M_t}            \def\Mr{M_r}                 
\def\gam{\gamma}
\def\argmin{\mathop{\mathrm{argmin}}}
\def\argmax{\mathop{\mathrm{argmax}}}
\def\inf{\mathop{\mathrm{inf}}}
\def\lp{\left(}     \def\rp{\right)}    \def\ls{\left\{}    \def\rs{\right\}}    \def\lS{ \left[ }
\def\rS{ \right] }  \def\la{\left|}     \def\ra{\right|}         
\def\var{\mathop{\mathrm{Var}}}
\newtheorem{theorem}{Theorem}[section]
\newtheorem{lemma}[theorem]{Lemma}
\DeclareMathAlphabet{\mathpzc}{OT1}{pzc}{m}{it}
\begin{document}

\title{
{Millimeter Wave Beamforming for Wireless Backhaul and Access in Small Cell Networks}}
\author{\small
Sooyoung~Hur$^{\dag}$, Taejoon~Kim$^{\ddag}$, David~J.~Love$^{\dag}$, James~V.~Krogmeier$^{\dag}$,
Timothy A. Thomas$^{\S}$, and Amitava Ghosh$^{\S}$\\
$^{\dag}$School of Electrical and Computer Engineering, Purdue University, West Lafayette, IN 47907 \\
         Email: \{shur, djlove,  jvk\}@ecn.purdue.edu\\
$^{\ddag}$Department of Electronic Engineering, City University of Hong Kong, Kowloon, Hong Kong \\
                 Email: taejokim@cityu.edu.hk\\
$^{\S}$Nokia Siemens Networks, Arlington Heights, IL 60004 \\
         Email: \{timothy.thomas, amitava.ghosh\}@nsn.com
\thanks{Parts of this paper were presented at the Globecom Workshop, Houston, TX, Dec. 5$-$9, 2011 \cite{Hur:2011}.}
}

\maketitle

\vspace{-2.0cm}
\begin{abstract}
\vspace{-0.4cm}
Recently, there has been considerable interest in new tiered network cellular architectures,  which would likely use many more cell sites than found today. Two major challenges will be i) providing backhaul to all of these cells and ii) finding efficient techniques to leverage higher frequency bands for mobile access and backhaul. This paper proposes the use of outdoor millimeter wave communications for backhaul networking between cells and mobile access within a cell.   To overcome the outdoor impairments found in millimeter wave propagation, this paper studies beamforming using large arrays.  However, such systems will require narrow beams, increasing sensitivity to movement caused by pole sway and other environmental concerns.   To overcome this, we propose an efficient beam alignment technique using adaptive subspace sampling and hierarchical beam codebooks. A wind sway analysis is presented to establish a notion of beam coherence time.  This highlights a previously unexplored tradeoff between array size and wind-induced movement.  Generally, it is not possible to use larger arrays without risking a corresponding performance loss from wind-induced beam misalignment.  The performance of the proposed alignment technique is analyzed and compared with other search and alignment methods. The results show significant performance improvement with reduced search time.
\end{abstract}

\vspace{-0.4cm}
\begin{IEEEkeywords}
\vspace{-0.5cm}
Millimeter wave, array antenna, beam alignment, beamforming codebook design, wind-induced vibration 
\end{IEEEkeywords}

%==========================================================================================================================
%==========================================================================================================================
\vspace{-0.5cm}
\section{Introduction}
\vspace{-0.2cm}
%==========================================================================================================================
%==========================================================================================================================
The exponential growth of demand for mobile multimedia services has motivated extensive research into improved spectrum efficiency  using techniques that increase geographic spectrum reusability, such as multi-tier cell deployment (e.g., picocell and femtocell networks) \cite{SmallCellForum:WhtPaper}.
Picocell networks are expected to be typically deployed to support demand from small, high throughput areas (e.g., urban centers, office buildings, shopping malls, train stations). 
One of the major impediments to deployment of heterogeneous small cell networks, such as picocell networks, is access to cost effective, reliable, and scalable backhaul networks.

These dense picocell deployments will make expensive wired backhaul infeasible  \cite{Chia:2009gs}.   It is also unrealistic to use existing cellular spectrum holdings for large-scale in-band backhaul, especially given the self-described ``spectrum crunch"  which cellular operators have recently lamented. Therefore, a scalable solution is to consider backhaul and access using carrier frequencies  outside of the traditional wireless bands.
Millimeter wave bands, the unlicensed 60 GHz band and the lower interference licensed 70 GHz to 80 GHz band, are a possible solution to the problem of providing small cell backhaul and access in tiered cellular networks.
Picocell networks could employ these millimeter wave backhaul networks using a variety of architectures including point-to-point links or backhaul aggregation  using an aggregator at the macrocell connected to a tree or mesh structured network.
The advantages of millimeter wave bands include the availability of many gigahertz of underutilized spectrum \cite{Chia:2009gs} and the line-of-sight (LOS) nature of millimeter wave communication which helps to control interference between systems.
However, millimeter wave systems require a large directional gain in order to combat their relatively high path loss compared to systems with lower frequencies and the additional losses due to rain and oxygen absorption.

To achieve this large directional gain, either a large physical aperture or a phased array antenna must be employed.
A large physical aperture is not possible due to  a very costly installation and the expected maintenance costs related to wind loading and other misalignments.
Thanks to the small wavelength of millimeter wave signals, large-sized phased-array antennas are able to offer large beamforming gain while keeping individual antenna elements small and cheap.
They also enable adaptive alignment of transmit and receive beams in order to relax cost requirements (e.g., relative to parabolic antennas)  for initial pointing accuracy and maintenance.

Beamforming techniques at millimeter wave have been widely researched in many standards including IEEE 802.15.3c (TG3c) \cite{IEEE:2009hg} for indoor wireless personal area networks (WPAN), IEEE 802.11ad (TGad) \cite{IEEE:80211ad} and Wireless Gigabit Alliance (WiGig) on wireless local area networks (WLAN), ECMA-387, and WirelessHD, which is focused on uncompressed HDTV streaming.
More specifically, beamforming techniques have been proposed for indoor office environments as applied in WPAN for ranges of a few meters \cite{Wang:2009jn}.
In the WLAN arena, a one-sided beam search using a beamforming codebook has been employed to establish the initial alignment between large-sized array antennas \cite{Cordeiro:2010if}.
However, beamforming methods used for indoor scenarios do not easily extend to outdoor scenarios where longer distances, outdoor propagation, and other environmental factors such as wind and precipitation can cause as much as a 48 dB receive SNR degradation \cite{HongZhang:2010dv} and thus require a much larger beamforming gain and more subtle beam alignment. 
In particular, since picocell units will be mounted to outdoor structures such as poles,  vibration and movement induced by wind flow and gusts have the potential to cause unacceptable outage probability if beam alignment is not frequently performed.

In this paper, we address two distinct but related topics. First, we research the design of millimeter wave wireless backhaul systems for supporting picocell data traffic.
Our focus is on the urban picocell deployment scenario where the wireless backhaul antennas are mounted on poles with link distances of 50 to 100 meters.
We propose a high gain, but computationally efficient, beam alignment technique that samples the channel subspace adaptively using subcodebook sets within a constrained time.
The proposed adaptive beam alignment algorithm utilizes a hierarchical beamforming codebook set to avoid the costly exhaustive sampling of all pairs of transmit and receive beams (which is here referred to as a non-adaptive joint alignment). 
The design of the hierarchical codebook uses a covering distance metric, which is optimized by adjusting steering squinting and utilizing efficient subarraying techniques \cite{trees2002detection}.
The proposed framework adaptively samples subspaces and searches for the beamformer and combiner pair that maximizes the receive SNR. It is shown to outperform both the non-adaptive joint alignment and the single-sided alignment (e.g., IEEE 802.11ad  \cite{IEEE:80211ad}).

Second, to motivate the practical deployment of these picocells, we also investigate wind effects on beam misalignment.
Pole sway and movement have long been studied in the civil engineering literature \cite{Piersol:2009,simiu:1996wind}, but to the best of our knowledge there has been no such work in the wireless communications research area.  We show that pole movement analysis can be used to perform backhaul failure analysis and to determine how often beam alignment must be performed. Generally, the larger the array, the more sensitive the millimeter wave link will be to wind induced misalignment.  This documents that there will likely be limitations to the achievable beamforming gain that can not be overcome by employing  larger antenna arrays.   

The paper is organized as follows. Section \ref{SysOveview} describes the system setup and the problem formulation. In Section \ref{PerfAnalysis}, the performance of the proposed beam-alignment system is analyzed in terms of the beam misalignment probability and the beam outage probability incorporating the wind-induced vibration, which is based on the established concept of beam coherence time in outdoor wireless backhaul networks.
In Section \ref{SubSampling}, a beam alignment technique is proposed that uses adaptive subspace sampling. The performance of the proposed scheme is evaluated in Section \ref{SimulationStudy}. The paper is concluded in Section \ref{Conclusions}.

Notation: a bold capital letter $\bA$ denotes a matrix, a bold lowercase letter $\ba$ denotes a vector, $\bA^T$ denotes the transpose of a matrix $\bA$, $\bA^*$ denotes the conjugate transpose of a matrix $\bA$, $\|\bA\|_F$ denotes the matrix Frobenious norm, $\|\ba\|$ denotes the vector 2-norm, and $\cC\cN(\ba,\bA)$ denotes a complex Gaussian random vector with mean $\ba$ and covariance matrix $\bA$. $\text{card}(\cA)$  denotes the cardinality of set $\cA$, $\text{rank}(\bA)$ denotes the rank of a matrix $\bA$, and $\Gamma(z)=\int^\infty_0 t^{z-1} e^{-t} dt$ denotes the gamma function.

%==========================================================================================================================
%==========================================================================================================================
% \vspace{-0.80cm}
\section{System Overview and Motivation} \label{SysOveview}
\vspace{-0.25cm}
%==========================================================================================================================
%==========================================================================================================================
The growth in demand for mobile broadband necessitates technical innovations in wireless network design and spectrum usage.  Generally, cellular operators have defined the usable spectrum to be roughly 300 MHz to 3 GHz \cite{Staple:2004Spectrum}.  Obviously, this small swath of  spectrum holds good propagation and implementation characteristics for many radio applications in addition to mobile broadband and it is therefore in high demand worldwide. The apparent key to using spectrum holdings as efficiently as possible is to increase frequency reuse across a geographic area (i.e., increasing the number of bits per second per Hertz per unit area) \cite{Xiaoxin:2004HierarchicalCell}.  This necessitates the use of smaller cells and dramatically increases backhaul complexity.\vspace{-0.7cm}

\begin{figure}[htbp]
    \centering
        \includegraphics[height=1.55in]{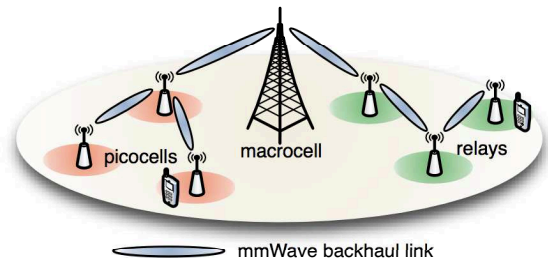}
		\vspace{-0.7cm}
    \caption{Multi-tiered cell using wireless backhaul.}
    \label{fig:MultiTieredNetwork}
\end{figure}
\vspace{-0.7cm}
Millimeter wave frequencies, roughly defined as bands between 60 GHz to 100 GHz, hold much potential for use as wireless backhaul between small cell access points and access within cells.
 A potential multi-tier cell deployment with millimeter wave wireless backhaul is shown in  Fig. \ref{fig:MultiTieredNetwork}.
 In this network, each picocell node combines its backhaul data with that received from other nodes in the network before forwarding it to the macrocell aggregation point shown. Picocell access points are expected to be separated by less than 100 meters, mitigating the deleterious effects of oxygen absorption and rain attenuation.
Coverage within the small cells (i.e., user access)  could also  be provided by millimeter wave, reducing the interference level experienced on the sub-3 GHz frequency bands used for mobile broadband.

The severe path loss of outdoor millimeter wave systems is a critical problem, especially when compared with the path losses found in other wireless systems using frequencies below 3 GHz.
In comparison with indoor millimeter wave systems, outdoor millimeter wave systems use longer links and require a much higher gain. 
\begin{figure}[t!]
    \centering
        \includegraphics[height=2.2in]{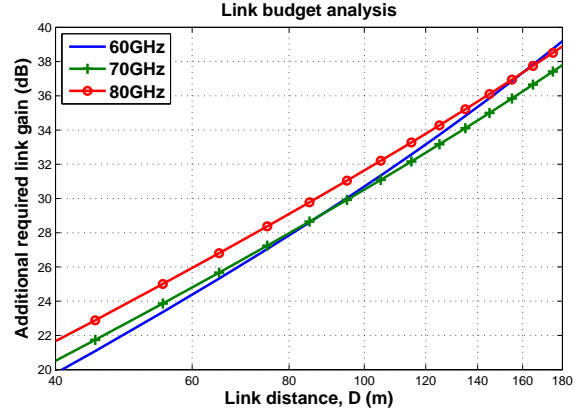}
		\vspace{-0.8cm}
    \caption{Link budget analysis for the required gain versus link distance.}
    \label{fig:LinkGain}\vspace{-0.3cm}
\end{figure}
\begin{table}[t!]
	\vspace{-0.1cm}
	\caption{ System link-budget }
	\begin{center}
	\resizebox{0.75\columnwidth}{!}{		
	\begin{tabular}{l|c}
		\hline
		Transmit Power, $P_t$ & 15 dBm \\
		Noise figure & 6 dB \\
		Thermal noise & -174 dBm/Hz \\
		Bandwidth, $B$ & 2 GHz \\
		Required SNR (QPSK with FEC) & 5 dB \\
		Pathloss model, $PL(D)$ &  $\!32.5 \!+ \!20\log_{10}(f_c) \!+ \!10\!\cdot\! a \!\cdot \log_{10}\left( D/1000\right) \!+ \!A_{i} \!\cdot \!D/1000$ dB\\
		Pathloss exponent, $a$ & 2.2 for LOS path \cite{Correia:1997kz} \\ 
		Additional Pathloss ($O_2$ and rainfall), $A_{i}$ & 20$\sim$36 dB/km \cite{Marcus:2005kv} \\
		\hline
	\end{tabular}	
	}		
	\label{table:LinkParmeters}
	\end{center}
	\vspace{-1.2cm}
\end{table}
Using a link-budget analysis, we can calculate the total required link gain for a given link distance as demonstrated in Fig. \ref{fig:LinkGain}.
For example, a 100 m link in the range of the target millimeter wave system requires an additional 32 dB or more gain for reliable communications compared to an indoor millimeter wave system.
The details of the link-budget calculation are summarized in Table \ref{table:LinkParmeters}. 
To overcome these deficiencies in path loss and obtain a large beamforming gain, large array gains are needed. An $\Mt$ transmit by $\Mr$ receive antenna system could use antennas numbering in the tens, hundreds, or potentially even thousands. 
These large arrays could provide both backhaul and access, and could be mounted
on road signs, lampposts, and other traffic control structures in urban deployments.
These access nodes would be subject to significant environmental movement (e.g., wind, moving vehicles, etc.). 
Furthermore, due to short links and narrow beam widths in millimeter wave, small changes to the propagation geometry could result in pointing errors large enough to affect link performance. 

We focus on analog transmit beamforming and receive combining for the system shown in Fig. \ref{fig:SysBD}. 
Analog beamforming using digitally controlled phase shifters is essential in millimeter wave systems to minimize the power consumption and complexity of the large number of RF chains in the array. 
\begin{figure}[htbp]
  \vspace{-0.4cm}
  \centering
  \includegraphics[width=3.2in]{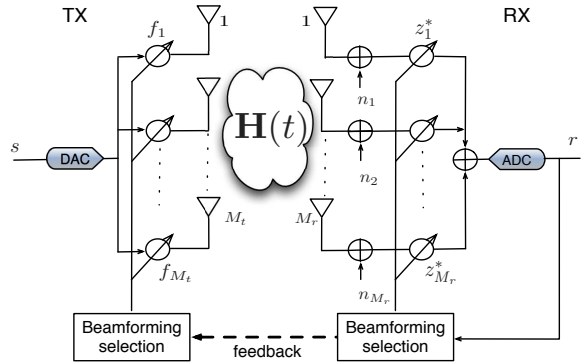}\\
  \vspace{-0.5cm}
  \caption{Block diagram showing the beamforming and combining system used on a single link in the backhaul network.}
  \label{fig:SysBD}
  \vspace{-0.5cm}
\end{figure}
In fact, duplex analog beamforming can be implemented with only a single analog-to-digital converter (ADC) and digital-to-analog converter (DAC).
The transmit data is multiplied by a transmit beamforming unit norm vector $\bff = [ f_1 \ \ f_2  \cdots  f_{\Mt}]^T \in \bbC^{\Mt}$  with $f_i$ denoting the complex weight on transmit antenna $i.$
At the receiver, the received signals on all antennas  are combined with a receive combining unit norm vector $\bz = [ z_1 \ \ z_2  \cdots  z_{\Mr}]^T \in \bbC^{\Mr}$.
The combiner output at discrete channel use $t$ given a transmit beamformer $\mathbf{f}$ and receive combiner $\mathbf{z}$ is
\begin{equation}
{r[t] = \sqrt{P} \mathbf{z}^* \left( \sum_{\tau=0}^{T-1} \bH_{\tau} s\left[t-{\tau} \right] \right)\mathbf{f}  + n[t]}
\label{eq:SysModel_Multipath}\vspace{0.2cm}
\end{equation}
where $s[t]$ is the  transmitted symbol with $E|s[t]|^2 \leq 1$,
$\bH_{{\tau}} \in \mathbb{C}^{\Mr \times \Mt} $ is the $\tau$th multiple-input multiple-output (MIMO) channel matrix among $T$ multipaths,
$n[t] \sim \mathcal{CN}(\b0, 1 )$, and $P$ represents the transmit power.

The coherence bandwidth of a millimeter wave system can be very large (e.g., on the order of 100 MHz \cite{BenDor:2011fb}), particularly in the most common line-of-sight setting.  For this reason, we make the assumption that $\mathbf{H}_1\approx\mathbf{H}_{T-1} \approx \mathbf{0}.$  This means that
% \vspace{-0.6cm}
\begin{equation}
\textstyle{r[t] = \sqrt{P} \mathbf{z}^*  \bH\mathbf{f} s\left[t\right]  + n[t]}
\label{eq:SysModel} \vspace{-0.1cm}
\end{equation}
using $\bH$ in place of $\bH_0$ and neglecting the path delay.
In a line-of-sight deployment, the channel $\bH$ in \eqref{eq:SysModel} can be modeled using array manifold concepts.  In this scenario, $\bH = \beta %e^{j\ttheta} 
\ba_r(\boldsymbol{\theta}_r)\ba_t^*(\boldsymbol{\theta}_t)$ with $\beta \in \C$ representing the normalized channel gain, $\ba_r(\boldsymbol{\theta}_r)$ representing the vector in the receiver's array manifold corresponding to the angle of arrival vector $\boldsymbol{\theta}_r,$ and $\ba_t(\boldsymbol{\theta}_t)$ similarly defined.\footnote{Note that we follow the angle of departure and the angle of arrival definitions in \cite{Xu:2002vu,tse2005fundamentals} which gives a conjugate transpose.}
The receive array manifold $\cA_R$ is defined by the array geometry and the set of possible angle of arrival vectors (e.g., sector width).  The transmit  array manifold $\cA_T$ is similarly defined. % using angle of departure vectors.
For simplicity in notation, the subscripts for transmit and receive are removed in our general description of the array manifold. 
If the array is an $M$ element uniform linear array (ULA) with element spacing $d,$ the array manifold with possible angle of arrivals in $\cT$ is given by
\vspace{-0.25cm} \begin{equation}
\cA = \left\{\ba : \ba = \left[1
~ e^{j2\pi \frac{d}{\lambda}\sin(\theta)}~\cdots~e^{j2\pi(M-1) \frac{d}{\lambda}\sin(\theta)}
\right]^T% \right.\\
%&\left.
\phantom{a_L^T}~\text{for}~\theta\in\cT
\right\} \label{eq:1DULA}
\vspace{-0.2cm}
\end{equation}
where $d$ is the antenna element spacing, $\lambda$ is the wavelength. 
For a two-dimensional ULA with $M^2$ elements all spaced on a grid with separation $d,$ the array manifold with possible angle of arrival vectors in $\cT = \cT_1\times\cT_2$ is given by
\vspace{-0.25cm} 
\beq
\textstyle
\cA \!=\!\! \left\{\!\ba \!:\! \ba \!= \!\!\left[\left[1 \!
~ e^{j2\pi \frac{d}{\lambda}\sin(\theta)}~\!\!\cdots~\!\!e^{j2\pi(M-1) \frac{d}{\lambda}\sin(\theta)} \!
\right] \!\!\otimes \!\!\left[1 \!
~ e^{j2\pi \frac{d}{\lambda}\sin(\phi)}~\!\!\cdots~\!\!e^{j2\pi(M-1) \frac{d}{\lambda}\sin(\phi)} \!
\right]\right]^T  ~\!\!\!\!\text{for}~\!(\theta,\phi)\!\in\!\cT\!
\right\}\!\!.  \label{eq:2DURA}
\eeq
    
The employment of digitally controlled analog beamforming and combining causes a variety of practical limitations that we mathematically model \cite{springerlink:10.1007/978-94-007-0662-0_3,mailloux:2005phased}. 
An analog beamforming pattern is generated by a digitally-controlled RF phase-shifter with $q$ bits per element,  \textcolor{black}{meaning that each antenna's phase takes one value $\varphi_m$ among a size $2^{q}$ set of quantized phases given by}
\vspace{-0.15cm}\beqn
\varphi_m \in \left\{ 0, \ 2\pi\left(\frac{1}{2^{q}}\right), \ 2\pi\left(\frac{2}{2^{q}}\right)  , \ \cdots \ , 2\pi\left(\frac{2^{q}-1}{2^{q}}\right) \right\} \label{eq:finitePhaseSet}.
\eeqn 
Therefore, the beamforming vector in our analog RF beamforming system in \eqref{eq:SysModel} is defined with the quantized phases $\varphi_m$ and written as
\vspace{-0.35cm}
\beq
\bff
 = \frac{1}{\sqrt{M}} 
\left[ 1 \ \ e^{j\varphi_1} \ \ e^{j\varphi_2} \ \ \cdots \ \ e^{j\varphi_{M-1}}  \right]^T. \label{eq:beam_f}
\eeq \vspace{-0.5cm}
Then $\cF_T$ denotes the set of all beamforming vectors $\bff$ where each element phase is quantized to $q$ bits as described above and in \eqref{eq:beam_f}. Similarly, the elements of the combiner vectors $\bz$ are phase quantized and the set of possible combiners is denoted $\cZ_R$.
Furthermore, millimeter wave communication using analog beamforming and combining suffers from a subspace sampling limitation.  The receiver cannot directly observe $\bH\bf,$ rather it observes a noisy version of $\bz^*\bH\bff.$ This can be a major limitation during channel estimation and beam alignment.
In conventional MIMO beamforming systems \cite{Love:2003bc}, the beamforming codeword is selected as a function of the estimated channel to maximize some measure of system performance.
Here, however, it is not practically possible to estimate all elements of the channel matrix $\bH$. Without the full CSI of the MIMO channel and the direct estimation of the channel matrix, the problem is converted to a general problem of subspace sampling for beam alignment. The transmitter and the receiver must collaborate to determine the best beamformer-combiner pair during beam-alignment by observing subspace samples.

Note that not all beamformers in $\cF_T$ may be allowable during data transmission due to regulatory constraints on the bands employed (e.g., transmit beam width constraints during transmission).  For this reason, we will introduce two other sets $\cF\subseteq\cF_T$ and $\cZ\subseteq\cZ_R$ for the sets of possible beamformers and combiners, respectively, allowed during the data transmission. The distinction is that the larger sets $\cF_T$ and $\cZ_R$ can be used during beam alignment (which takes only a fraction of the total operational time).

In order to maximize both achievable rate and reliability, vectors $\bz$ and $\bff$ must be chosen to maximize the beamforming gain $\left|\bz^*\bH\bff\right|^2.$
Unfortunately, the transmitter and receiver can only perform subspace sampling by sending a training packet transmitted and received on a beamformer-combiner direction.
After combining (or correlating) the $\ell$-th training packet using the subspace pair $(\bz[\ell],\bff[\ell]),$ we model the receiver as having access to
\vspace{-0.3cm}\begin{equation}\label{training_packet}
y[\ell] = \sqrt{\rho} \bz^*[\ell]\bH\bff[\ell] +v[\ell] 
\vspace{-0.3cm}
\end{equation}
where $\rho$ is termed as the training signal-to-noise ratio and $v[\ell] \sim \cC\cN(0,1)$.
(Note that in \eqref{training_packet} the SNR term $\rho$ may not be the same as $P$ in \eqref{eq:SysModel}. This is because $\rho$ models the averaged SNR after the training sequence is match filtered.)

Just as channel estimation must be done reliably using limited time and power resources in lower frequency MIMO systems, beam alignment must maximize the beamforming gain using a small number of samples at a possibly low training SNR $\rho.$  We denote the total number of samples as $L$ and assume that $L = O(\Mr+\Mt).$  This  means that $L \ll \Mr\Mt$, making each sample valuable to system performance. This sampling could be done using feedback in frequency division duplexing (FDD) systems or using the link reciprocity available in time division duplexing (TDD) systems with minor modifications.

Most indoor millimeter wave alignment  schemes and radar based alignment schemes rely on a low complexity approach to selecting the beams that we refer to as \textit{hard} beam alignment.  In this technique, the selected beam pair $(\bz,\bff)$ is limited to beam pairs that have been sounded during the sampling phase.

To enforce the constraint that $\bz\in\cZ$ and $\bff\in\cF,$ we disregard samples $y[\ell]$ when $\bz[\ell]\notin\cZ$ or $\bff[\ell]\notin\cF.$  This can be succinctly written by introducing
\vspace{-0.15cm}
\begin{equation}
\widetilde{y}[\ell] = \begin{cases}
y[\ell], & \text{if $\bz[\ell]\in\cZ$ and $\bff[\ell]\in\cF$,}\\
0, & \text{otherwise.}
\end{cases}
\end{equation}
Using this function, the hard alignment algorithm returns
\vspace{-0.4cm}
\begin{equation}
(\bz_{opt},\bff_{opt}) = (\bz_{\ell_{opt}},\bff_{\ell_{opt}})~~~~~\text{with}~~~~~\ell_{opt} =  \argmax_{\ell} \left|\widetilde{y}[\ell]\right|^2. \label{B1.1}
\end{equation}

The selection of $\bz$ and $\bff$ using the small number of observed training packets can be broken into two related sub-problems:\\
\noindent\textbf{Beam Alignment Problem:}
This defines the problem of selecting the subspace pair $(\bz,\bff)$ with  $\bz\in\cZ$ and $\bff\in\cF$ to maximize $\left|\bz^*\bH\bff\right|^2$ using only the observations $y[1],\ldots,y[L].$  \\
\noindent\textbf{Subspace Sampling Problem:}
This defines the problem of selecting the subspace pair $(\bz[\ell],\bff[\ell])$ with $\bz[\ell]\in\cZ_R$ and $\bff[\ell]\in\cF_T$ for each time $\ell.$  This sampling can be done either without adaptation (i.e., $(\bz[\ell],\bff[\ell])$  is chosen independently of
$y[1],\ldots,y[\ell-1]$) or adaptively (i.e., $(\bz[\ell],\bff[\ell])$  is chosen as a function of
$y[1],\ldots,y[\ell-1]$).

%==========================================================================================================================
%==========================================================================================================================
% \vspace{-0.5cm}
\section{Performance Analysis of the Beam Alignment Criterion} \label{PerfAnalysis}
\vspace{-0.2cm}
%==========================================================================================================================
%==========================================================================================================================
In this section, we characterize the performance of the beam alignment in terms of the pairwise error probability. Furthermore, we model the wind-induced vibration for a small cell mounted on a lamppost, and the effect of the vibration on the beam alignment performance is measured in terms of the beam outage probability and the beam coherence time.
\vspace{-0.5cm}
\subsection{Performance Analysis of Beam Alignment} \label{Perf}
Due to the size of the arrays involved and properties of millimeter wave propagation (as discussed in Section \ref{SysOveview}), the rank of $\bH$ will be highly constrained.  Specifically, most environments will have the property that $\frac{\text{rank}(\bH)}{\min(\Mr,\Mt) }  \approx 0.$  In the majority of the cases considered for line-of-sight backhaul, it is likely that an accurate model for the channel uses $\text{rank}(\bH) = 1$\cite{HongZhang:2010wf}, especially for the scenario when all antennas are single-polarized. For this reason, we assume $\bH$ as rank one throughout our analysis. Furthermore, we model $\bH$ as being constrained so that $E\left[\|\bH\|_F^2\right] = \Mr \Mt,$ meaning that path loss is lumped into the signal transmit power or noise power term. We assume that the beam alignment is sufficiently dense that we can assume
\vspace{-0.5cm}
\begin{equation}\label{eq_assume}
\bH =  %e^{j\phi}
\bh\bg^*
\end{equation}
with $\beta = 1$ in channel model $\bH$ for simplicity. % and $\phi$ assumed to be uniformly distributed between $[0,2\pi).$

With these assumptions, each $y[\ell]$ corresponds to a noisy observation taken using a subspace pair $(\bz[\ell],\bff[\ell]).$
For convenience, we will assume that $\bz[\ell]\in\cZ$ and $\bff[\ell]\in\cF$ for all $\ell=1,\ldots, L.$
The optimal pair of sounding vectors is denoted by $\hat{\ell}_{opt}$ and is defined as
\vspace{-0.4cm}
\begin{equation}\notag
\hat{\ell}_{opt} = \argmax_{1\leq \ell \leq L} \left|\bz^*[\ell]\bh\bg^*\bff[\ell]\right|^2.
\end{equation}
This represents the pair of vectors that would be chosen if noiseless sounding was performed.
We assume a uniform prior distribution on the optimal sounding vectors (i.e., $\hat{\ell}_{opt}$ is uniformly distributed in $\{1,\ldots,L\}$ ).
 Given this, we can now evaluate the probability of beam misalignment.
The probability of beam misalignment is expressed as
\vspace{-0.2cm}
\beq
P_{mis} &=& \frac{1}{L} \sum_{\hat{\ell}_{opt}=1}^{L} \Pr\lp \bigcup_{\hell \neq \hat{\ell}_{opt}}^{L}  \ls \la y\lS \hat{\ell}_{opt} \rS \ra^2 < \la y\lS \hell \rS \ra^2 \rs \rp. \label{Pmis}
\eeq
We can bound $P_{mis}$ by
\vspace{-0.3cm}
\beq
P_{mis} \geq 
\max_{\ell_{opt}\neq \hat{\ell}_{opt}} \Pr\lp \la y\lS \hat{\ell}_{opt} \rS \ra^2 - \la y\lS \hell\neq \hat{\ell}_{opt} \rS \ra^2 < 0  \rp,
\eeq
and
\beq 
P_{mis} \leq \frac{1}{L} \sum_{\hat{\ell}_{opt}=1}^{L} \sum_{\hell \neq \hat{\ell}_{opt} }^{L} \Pr\lp \la y\lS \hat{\ell}_{opt}\rS \ra^2 - \la y\lS \hell \rS \ra^2 < 0  \rp. \label{Pmis_UnionBound} \vspace{0.1cm}
\eeq
The above makes clear that studying the beam misalignment rate is equivalent to characterizing the
pair-wise beam misalignment probability as the bounds coincide as $\rho$ increases.

Notice that $y\lS \hat{\ell}_{opt} \rS$ and $y\lS \hell \rS$ are complex Gaussian
distributed with 
\vspace{-0.3cm}
\begin{equation}\notag
E\lS y\lS \hat{\ell}_{opt} \rS \rS = \sqrt{\rho}%e^{j\phi}
\bz^*\lS \hat{\ell}_{opt} \rS \bh \bg^* \bff\lS \hat{\ell}_{opt} \rS, \ \ \
\end{equation}
\begin{equation}\notag
E\lS y\lS \hell \rS \rS = \sqrt{\rho}%e^{j\phi}
\bz^*\lS \hell \rS \bh \bg^* \bff\lS \hell \rS,\vspace{-0.2cm}
\end{equation} and
$\var\lS y\lS \hat{\ell}_{opt} \rS \rS = \var\lS y\lS \hell \rS \rS = 1$.
A general expression for the probability that the difference of the two magnitudes of complex Gaussian random variables is negative
can be found in \cite{Proakis} (see Appendix B in \cite{Proakis}).
Denoting $ \gam_{\hat{\ell}_{opt}}  \triangleq \la \bz^*\lS \hat{\ell}_{opt} \rS \bh \bg^* \bff\lS \hat{\ell}_{opt} \rS \ra $ and
$\gam_{\hell} \triangleq \la \bz^*\lS \hell \rS \bh \bg^* \bff\lS \hell \rS \ra$ and
incorporating \cite{Proakis} yields the pair-wise beam misalignment probability
\vspace{-0.1cm}
\beq
\Pr\lp \la y\lS \hat{\ell}_{opt} \rS \ra^2 \!\! - \!\!  \la y\lS \hell \rS \ra^2 <0  \rp
= Q_1\lp \sqrt{\rho} \gam_{\hell} ,  \sqrt{\rho} \gam_{\hat{\ell}_{opt}} \rp 
 -\frac{1}{2} I_0\lp \rho \gam_{\hell} \gam_{\hat{\ell}_{opt}} \rp
e^{\lp -\frac{1}{2}\rho \lp \gam_{\hell}^2 + \gam_{\hat{\ell}_{opt}}^2 \rp  \rp} \label{B5}
\eeq
where $I_n\lp x \rp$ represents the modified Bessel function of the first kind,
\vspace{-0.3cm}
\beq
I_n\lp x \rp = \frac{1}{2\pi} \int_{0}^{2\pi} e^{\pm j n \theta} e^{x\cos(\theta)} d\theta , \nonumber
\eeq
and $Q_1(a, b)$ denotes the Marcum Q function,
\vspace{-0.3cm}
\beq
Q_1(a, b) = \int_{b}^{\infty} x e^{-\lp x^2 + a^2 \rp/2} I_0\lp ax \rp dx. \nonumber
\eeq
The expression in \eqref{B5} could be approximated by a closed-form expression using the results in \cite{Sofotasios:MarcumQfunc}, i.e.,
\beq
\Pr \lp \la y\lS \hat{\ell}_{opt} \rS \ra^2 \!\! - \!\!  \la y\lS \hell \rS \ra^2 < 0  \rp  %\nonumber\\
 \approx && \!\!\!\!\!\!\!\! \exp \left(-\frac{\rho}{2}\gam_{\hell}^2\right) \times \nonumber\\
 \sum^{k}_{l=0} 
\frac{\Gamma(k\!+\!l) \!\ k^{1-2l}  \rho^l \gam_{\hell}^{2l}}
{\Gamma^2(l+1) \Gamma(k-l+1) 2^l} && \!\!\!\!\!\!\!\!\!
\!\!\left[  \Gamma\!\left(\!l\!+\!1,\!{ \frac{\rho }{2}\gam_{\hat{\ell}_{opt}}^2 }\!\right)\! 
\!-\!\frac{\rho^l \gam_{\hat{\ell}_{opt}}^{2l}}{2^{l+1}\exp\left(\frac{\rho}{2} \gam_{\hat{\ell}_{opt}}^2 \right)}
\right]\!\!, \ \ \ \ \ \ \ \ \ \ \ \ \ \ \label{Prob_err} 
\eeq 
where $\Gamma(x,y)$ denotes the upper incomplete gamma function,
\beq
\Gamma(a,x) = \frac{1}{\Gamma(a)}\int^\infty_x e^{-t} t^{a-1} dt.
\eeq

Directly analyzing the expression in \eqref{Prob_err} does not provide much intuition. For this reason, we incorporate a tractable expression of \eqref{B5} by introducing the notation \cite{Pawula}
\beq
P(U, V) &=& Q_1\lp \sqrt{U-W}, \sqrt{U+W} \rp -\frac{1}{2} e^{-U} I_0(V) \label{B5.0}
\eeq
where we have
\vspace{-0.3cm}
\beq
U=\frac{1}{2}\rho \lp \gam_{\hat{\ell}_{opt}}^2 + \gam_{\hell}^2 \rp, \ 
V = \rho \gam_{\hat{\ell}_{opt}} \gam_{\hell} , \ 
W = \frac{1}{2}\rho \lp \gam_{\hat{\ell}_{opt}}^2  - \gam_{\hell}^2 \rp.  \label{B5.1}
\eeq
As $U,V$ tend to infinity while keeping $U\geq V$, $P(U, V)$ in \eqref{B5.0} converges to \cite{Pawula}
\beq
P(U,V) \stackrel{U,V\rightarrow \infty}{=} \sqrt{\frac{U+V}{8V}} \textrm{erfc}\lp \sqrt{U-V} \rp  \label{B5.1.1}
\eeq
with $\textrm{erfc}(x)$ denoting the complementary error function.\footnote{The complementary error function is defined as $\textrm{erfc}(x)=\frac{2}{\sqrt{\pi}} \int^\infty_x e^{-k^2}dk$.}
Notice that $U$ and $V$ in \eqref{B5.1} always satisfy $U\geq V$. 
\textcolor{black}{Keeping the exponential dependency, \eqref{B5.1.1}} can be simplified to
\beq
\color{black} 
P(U, V) \stackrel{U,V\rightarrow \infty}{=} \sqrt{\frac{U+V}{8V}} e^{-(U-V)} \label{B5.1.2}
\eeq
where in \eqref{B5.1.2} we use the fact that $\textrm{erfc}\lp x \rp \approx e^{-x^2}$ as $x$ tends infinity.

The asymptotic expression in \eqref{B5.1.2} readily allows us to obtain the expression for the pair-wise beam misalignment rate
\vspace{-0.4cm}
\beq
\color{black}
\textstyle\Pr\lp \la y\lS \hat{\ell}_{opt} \rS \ra^2 - \la y\lS \hell \rS \ra^2 <0  \rp
\stackrel{\rho, \gam_{\hat{\ell}_{opt}}, \gam_{\hell}\rightarrow \infty }{=} \sqrt{ \frac{\lp \gam_{\hat{\ell}_{opt}} + \gam_{\hell} \rp^2 }{8(\gam_{\hat{\ell}_{opt}}\gam_{\hell})} }
e^{-\frac{\rho}{2} \lp \gam_{\hat{\ell}_{opt}}  - \gam_{\hell} \rp^2}. \label{B5.2}\vspace{0.2cm}
\eeq
Notice that since $\gam_{\hat{\ell}_{opt}}>\gam_{\hell}$, the beam alignment ensures
a pairwise \emph{exponential decay} of the beam misalignment rate \textcolor{black}{as $\rho$, $\gam_{\hat{\ell}_{opt}}$, and $\gam_{\hell}$ increase.}

We validate the analysis presented in this section in Fig. \ref{fig:PerfAnalysis}. The system is assumed to have $M=32$ transmit and receive antennas at each side. A size $64$ beamforming and combining codebook is used for the numerical simulation. The channel vector $\bh$ and $\bg$ are modeled as arising from a ULA in \eqref{eq:1DULA}. 
To demonstrate the accuracy of \textcolor{black}{the bounds} and the asymptotic expressions derived, the plots of beam misalignment using \eqref{B5.1.1} and \eqref{B5.2} are compared with the plots of beam misalignment in \eqref{Pmis} and upper bound in \eqref{Pmis_UnionBound}.
\textcolor{black}{As seen from the figure, the pair-wise misalignment probability in \eqref{Pmis_UnionBound} coincides with $P_{mis}$ in \eqref{Pmis} as SNR increases. Furthermore, it is clear that the asymptotic expression in \eqref{B5.1.1} indeed tightly models \eqref{Pmis_UnionBound}. Notice that we also plot \eqref{B5.2} to demonstrate that \eqref{B5.2} closely models the slope behavior of the beam misalignment rates in \eqref{Pmis} and \eqref{Pmis_UnionBound}.}
\begin{figure}[htbp]
  \vspace{-0.3cm}
  \centering
  \includegraphics[height=2.45in]{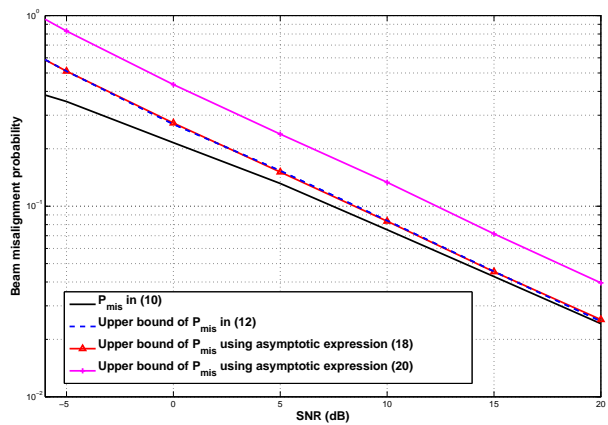}
  \vspace{-1.0cm}
  \caption{Comparison beam misalignment probabilities for $M=32$ with $\textrm{card}(\cF)=\textrm{card}(\cZ)=64$.}
  \label{fig:PerfAnalysis}
  \vspace{-0.7cm}
\end{figure}

\subsection{Wind Induced Impairments in Beam Alignment} \label{WindSwayEffect}
In practical scenarios, small cells deployed in urban outdoor environments are regularly affected by wind. In millimeter wave beamforming systems, the wind-induced movement is on the order of hundreds of wavelengths and they use a very narrow beam pattern. We consider the practical impairments of a lamppost deployment scenario by modeling the wind-induced vibration and incorporating the wind-sway analysis methodology from the civil engineering literature into our beamforming system and system design. The details are summarized in Appendix \ref{WindModel}.

\begin{figure}[htbp]
  \vspace{-0.4cm}
  \centering
  \includegraphics[height=3.5in]{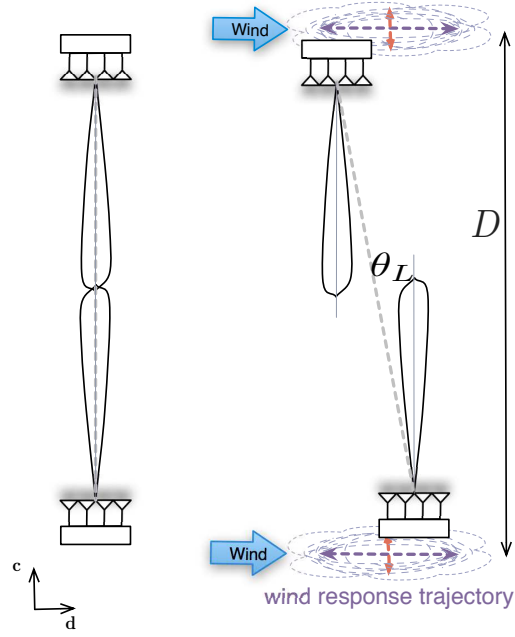}
  \vspace{-0.5cm}
  \caption{Example pole movement showing beam deflection due to wind.}
  \label{fig:BeamOutage}
  \vspace{-0.7cm}
\end{figure}
Following the development of the wind-sway model in Appendix \ref{WindModel}, a trajectory for the motion of the top of the antenna mounting pole can be computed. Assuming that the wind turbulence components are independent for the poles at the two ends of the link we may generate trajectories and misalignments as illustrated in Fig. \ref{fig:BeamOutage}. If $\Delta L_d(t)$ and $\Delta L_c(t)$ denote the relative displacements of the pole-tops at the two ends of the link and assuming a link distance of $D$ and the worst case scenario where the mean wind direction is perpendicular to the beam direction then the sway angle is given by
\beq
\theta_L(t) =\tan^{-1} \left( \frac{\Delta L_d(t)}{D + \Delta L_c(t)}\right).
\label{eq:theta_D_L}
\eeq
Power fluctuation due to $\Delta L_c(t)$ variation is negligible due to large link distance $D$ so the outage probability depends only on the beam angular deflection.
Define beam outage as the event where the beam deflection angle $\theta_L(t)$ is larger than some maximum allowable deflection angle $\theta_{L,max}$. Then the beam outage probability is given by
\vspace{-0.5cm}
\begin{equation}
P_{out} = Prob \{ | \theta_L(t)| \ > \ \theta_{L,max}| \}.\vspace{-0.2cm}\label{eq:Pout}
\end{equation}
If $T_{out}$ is a random variable representing the time to first outage, i.e.,
\vspace{-0.4cm}
\begin{equation}
 T_{out} = \inf \{T: |\theta_L(t)|\leq\theta_{L,max} \ \ \textrm{for} \ \ 0 \leq t < T \ \ \textrm{and} \ \ |\theta_L(T)| > \theta_{L,max}\} %\nonumber %\vspace{-0.2cm}
\end{equation}
then the coherence time is defined to be $T_c = E[T_{out}]$.

Numerical results were obtained for the following parameters:
\vspace{-0.2cm}
\begin{itemize}
\item Link distance $D = 50$ m.
\vspace{-0.1cm}
\item Mean wind speed $\overline{u} =  13$ $\mbox{m/s}$.
\vspace{-0.1cm}
\item Air density $\rho_a = 1.22$ $\mbox{kg/m}^3$.
\vspace{-0.1cm}
\item Coefficient of drag $C_D = 0.5$. %\cite{AASHTO}.
\vspace{-0.1cm}
\item Pole response parameters $f_n = 1 \ Hz$, $\zeta = 0.002$ \cite{simiu:1996wind}, mass of the pole and antenna \vspace{-0.1cm} 

mounting is considered to $m = 5$ kg, and effective area $A_e = 0.09$ $\mbox{m}^2$.
\end{itemize} \vspace{-0.1cm}

The maximum deflection angle $\theta_{L,max}$ is defined as a small fraction of the beam width $\theta_{BW}$  given by 
\vspace{-0.7cm}
\beq
\theta_{L,max} = \alpha \theta_{BW}
\eeq
where $\alpha$ is the fraction ratio calculated for a certain beamforming gain loss. 
For a uniform linear array with half-wavelength spacing $\theta_{BW} = 2\sin^{-1}\left({0.891}/{M} \right)$ in \cite{trees2002detection}.
For a 3 dB loss in two-sided ULA beamforming gain, $\alpha = 0.3578$ is obtained from the relationship between the beamforming pattern and the beam width $\theta_{BW}$ using standard array parameter values.

In Fig. \ref{fig:subfigC1}, the beam outage probability in \eqref{eq:Pout} for various array sizes is plotted as a function of the mean wind speed
$\overline{u}$. For example, an $M=32$ system in strong wind turbulence with $\overline{u}=20 \ \mbox{m/s}$ is in outage approximately 25 percent of the time. Note that a large-sized array is much more sensitive to beam deflection due to wind.  This means that the achievable beamforming gain could be limited no matter the array size because of this wind-induced beam misalignment. To overcome this misalignment, beam realignment will have to be done frequently.  
Similarly, the coherence time is given in Fig. \ref{fig:subfigC2}.
We are interested in the order of the beam coherence time. 
Notice that the expected beam coherence time of the $M=64$ system with $\overline{u}=20 \ \textrm{m/s}$ is on the order of 100s of milliseconds. From this modeling, a system that needs to track the beam would require an alignment search time somewhat smaller than the order of milliseconds to avoid beam outage. Many practical settings must deal with moving vehicles on streets and Doppler frequency shifts caused by scatters, and the systems found in these settings require more frequent alignment to satisfy a smaller beam coherence time.
\begin{figure}[htbp]
\subfigure[]
{
	\includegraphics[height=2.5in]{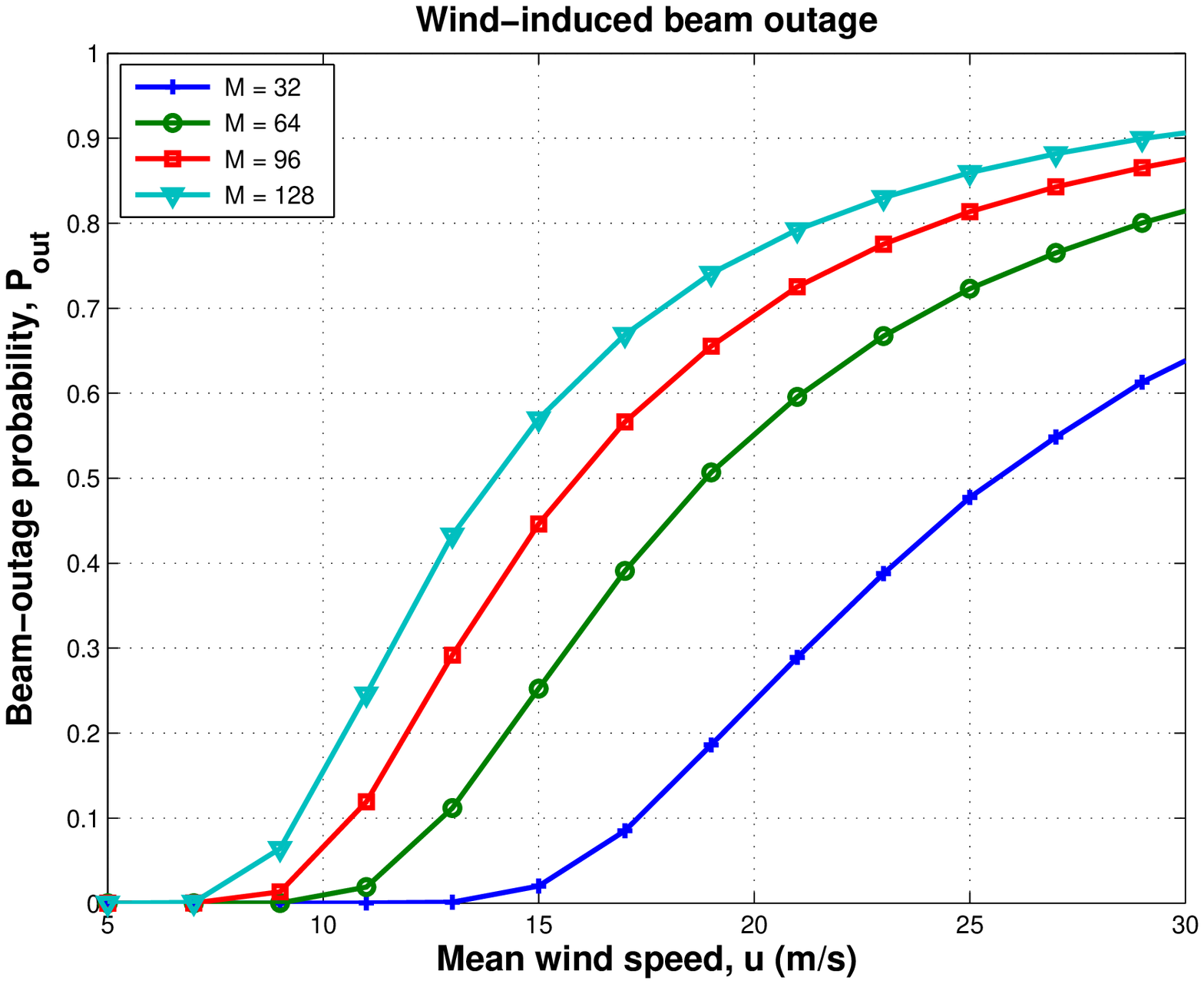}
    % \rule{4cm}{3cm}
    \label{fig:subfigC1}
}\vspace{-0.3cm}
\subfigure[]
{
	\includegraphics[height=2.5in]{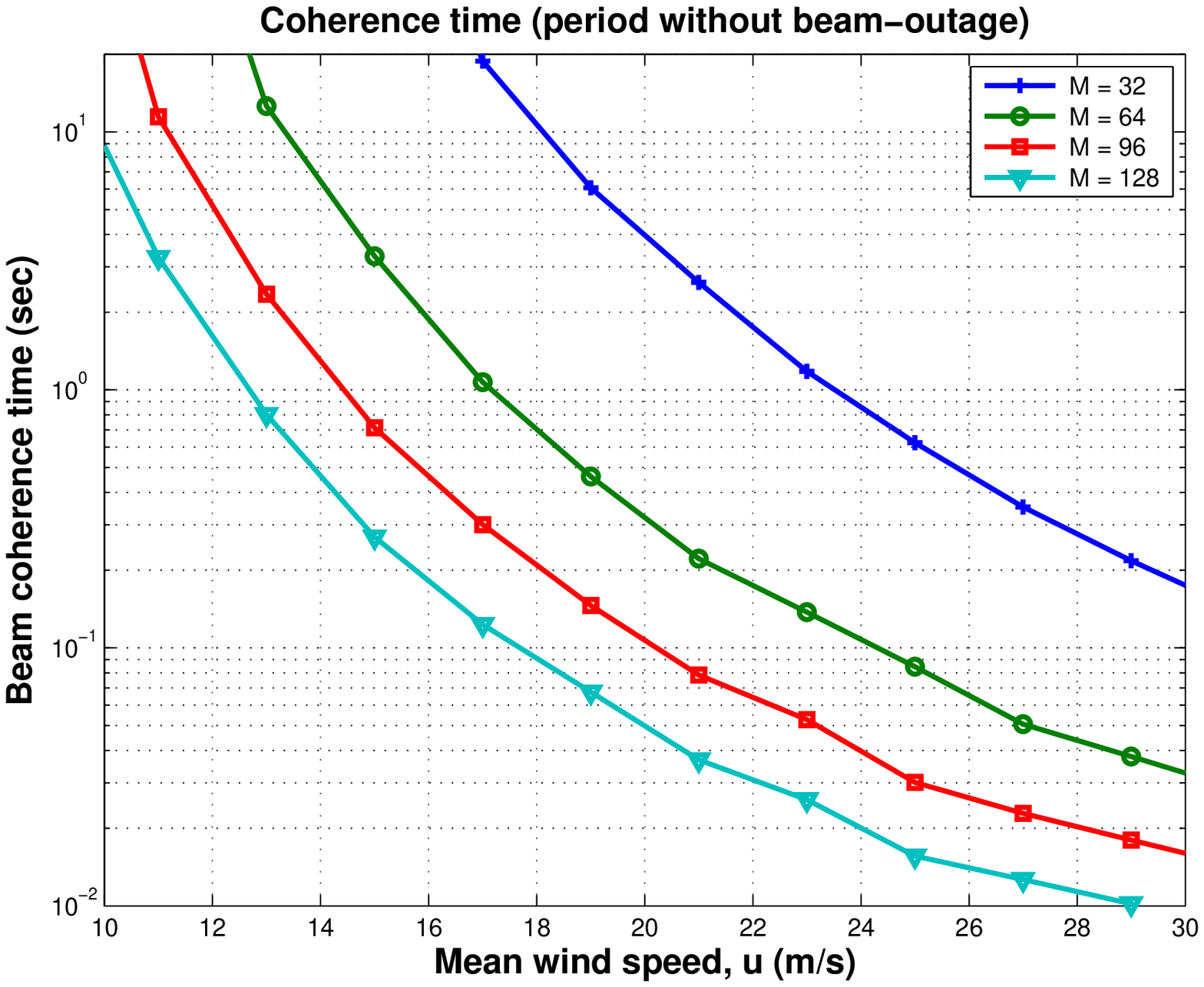}
    \label{fig:subfigC2}
}
\caption{\subref{fig:subfigC1} Beam outage probability.
		 \subref{fig:subfigC2} Beamforming coherence time on wind speed.}
\label{fig:WindSimulation}%\vspace{-0.9cm}
\end{figure}

%==========================================================================================================================
%==========================================================================================================================
\vspace{-0.5cm}
\section{Subspace Sampling for Beam Alignment} \label{SubSampling}
\vspace{-0.2cm}
%==========================================================================================================================
%==========================================================================================================================
Though the actual beam selection algorithms in the previous section are important, they are limited to the observed data.  For this reason, it is critical that the sampled subspaces are chosen judiciously.  We overview both non-adaptive  and adaptive subspace sampling.
\vspace{-0.7cm}
\subsection{Non-Adaptive Subspace Sampling}\label{randsampling}
The most time intensive, but most obvious, method of sampling is to simply sound the channel with all possible pairs of beamforming and combining vectors.  In this method, the total sounding time required is $L = \text{card}(\cZ)\text{card}(\cF).$

If we form an $\Mr\times \text{card}(\cZ)$ matrix $\bZ_{all}$ using all vectors in $\cZ$ and an $\Mt\times \text{card}(\cF)$ matrix $\bF_{all}$ using all vectors in $\cF.$
We can collect all of our samples and write the sampled signal using $\by = \left[y[1]\cdots y[L]\right]^T$ and $\bv = \left[v[1]\cdots v[L]\right]^T$ as
\begin{equation}
\by = \sqrt{\rho}\text{vec}\left(\bZ_{all}^*\bH\bF_{all}\right)+\bv \label{eq:NonAdapSubSampling}
\end{equation}
where $\text{vec}$ stacks the columns of the matrix into a column vector.
The received vector can then be easily used for alignment. The selected beam pair then corresponds to the index that achieves the sup norm $\|\by\|_\infty$.

\subsection{Adaptive Sampling}\vspace{-0.3cm}
Most practical scenarios allow the transmitter access to some information (e.g., through feedback) about ${y[1],\ldots,\!y[\ell\!-\!1]}$ prior to sending training packet $\ell.$  This can allow adaptive sampling and the potential to dramatically increase beamforming gain by overcoming noise during training.

\noindent\textbf{Ping-Pong Adaptive Sampling}\\
Consider $y[\ell]$ in (\ref{training_packet}).  Clearly if $\bff[\ell]$ is close in subspace distance to the dominant right singular vector of $\bH,$ we can obtain a high SNR estimate of the left singular vector of $\bH.$  Similarly, if $\bz[\ell]$ is close in subspace distance to the dominant left singular vector of $\bH,$  we can obtain a high SNR estimate of the right singular vector of $\bH.$

These observations motivate \textit{ping-pong sampling}.  Let $\bff_{opt,\ell}$ and $\bz_{opt,\ell}$ denote the estimated beam directions (using some beam alignment algorithm) using samples $y[1],\ldots,y[\ell].$ 
In $K$-round ping-pong sampling with $L_K = \frac{L}{2K}$ assumed to be a positive integer, 
$\bff[\ell] = \bff_{opt,2 L_K\lceil{\ell/2L_K\rceil}-L_K}$ when  ${(\ell-1) \mod 2 L_K} \geq L_K$
and
$\bz[\ell] = \bz_{opt,2 L_K\lfloor{\ell/2L_K\rfloor}}$ when ${(\ell-1) \mod 2 L_K} < L_K.$ 
The basic idea is to allow the transmitter to probe the channel's subspace structure with assistance from the receiver for the first half of each round and the receiver to probe the channel's subspace structure with assistance from the transmitter during the last half of each round.
Note that the initial receive beam $\bz_{opt,0}$ at the first stage of the ping-pong sampling can be defined as any initial beam. The details of the ping-ping sampling strategy are shown in Fig. \ref{fig:ping_pong}. Each bin represents the transmit and receive beams respectively used for channel sounding. The transmit beamformer $\bff[\ell]$ and receiver combiner $\bz[\ell]$ pair is selected and the output sample $y[\ell]$ is observed.
After each $L_K$ ping-pong sampling, $\bff_{opt,\ell}$ or $\bz_{opt,\ell} (\ell = L_K, 2L_K, \cdots, 2KL_K)$ are estimated, then the final pair of beam $\bff_{opt,L}$ and $\bz_{opt,L}$ is estimated using the $L$ observations.
\begin{figure}[htbp]
	\vspace{-0.6cm}
	\centering
		\includegraphics[height=1.5in]{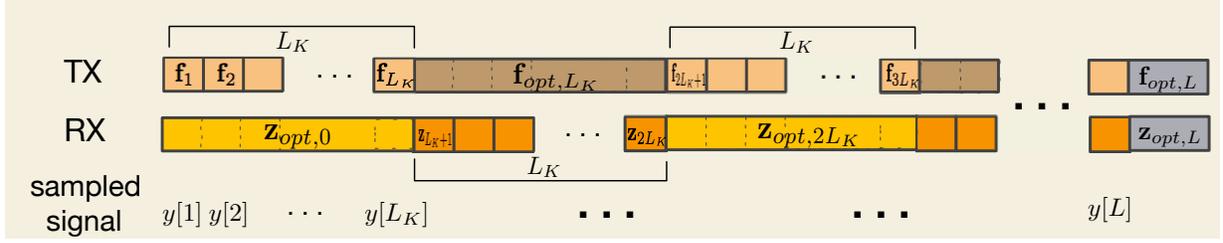}
		\vspace{-1.2cm}
	\caption{Flowchart of ping-pong adaptive sampling.}
	\label{fig:ping_pong} 
	\vspace{-0.5cm}
\end{figure}

\noindent\textbf{Adaptive Subspace Sampling using Hierarchical Subcodebooks}\\
\indent In millimeter wave systems, the potentially large number of antennas and substantial beamforming gain requirement will necessitate the codebook sizes of $\cZ$ and $\cF$ to be \textit{very} large.  For example, if $M = 100,$ which is not unreasonable, and a simple codebook is constructed by fixing the phase on one antenna and phase shifting every other antenna by one of four unique phases ($q=2$), the codebook $\cF$ might be of size $4^{99}\approx 4.0173\cdot 10^{59}.$  Therefore, it would be completely impractical to search over all possible beamformer and combiner pairs. To alleviate these concerns, we consider an adaptive subspace sounding method. For the sake of simplicity, the sounding method will be described for the transmit-side beamformer using $\cF,$ but all of the described techniques equally apply to the receive-side combiner.

Assume the $K$-round ping-pong sounding approach discussed above.  We assume that $\cF$ is designed to uniformly cover (or quantize) the array manifold denoted by $\cA$.
We  construct a series of increasing resolution codebooks
$\cF_1,$ $\cF_2,\ldots,$ and $\cF_K$ satisfying $N_1= \text{card}(\cF_1) < N_2=\text{card}(\cF_2)<\cdots<N_K=\text{card}(\cF_K)$ with $\cF_K = \cF$ and $\cF_k \subset \cF_T$ for all $k$.   
The codebook sizes can be flexibly defined with the only requirement being that $N_k \leq (L_K)^k$ for $k=1,\ldots,K.$
Generally, the first round subcodebook $\cF_1$ is designed with $N_1 = L_K$, and the last round subcodebook $\cF_K$ is designed with $N_K \geq 2 M.$\footnote{The codebook size in millimeter wave beamforming with a large-sized array is generally more than twice as large as the number of antennas in order to keep gain fluctuation within $1\textrm{dB}$ \cite{Wang:2009jr}.}

The task then is to create subcodebooks of $\cF_T$ to quantize $\cA.$  
We can do this by minimizing the covering distance of the code over the space $\cA$. 
The covering distance is given by 
\begin{equation}
\delta\left(\widetilde{\cF}\right) = \sqrt{1- \frac{\chi(\widetilde{\cF})}{M}}
\end{equation} 
where $\chi(\widetilde{\cF})$ is the minimum absolute squared inner product of the subcodebook beams and the array manifold defined by
\begin{equation}
\chi\left(\widetilde{\cF}\right) = \min_{\ba\in\cA} \max_{\bff\in\widetilde{\cF}} |\bff^*\ba|^2. \label{eq:CodeMetric}
\end{equation}   
Equation \eqref{eq:CodeMetric} tells us the smallest beamforming gain factor possible given the codebook $\widetilde{\cF}$ and perfect selection.
We therefore pick each codebook $\cF_k, \ k=1,\ldots,K-1,$ according to
\begin{equation}
\cF_k = \argmin_{\widetilde{\cF}\subseteq\cF_T:\text{card}(\widetilde{\cF}) = N_k} \delta\left(\widetilde{\cF}\right). \label{eq:CodebookDesign}\vspace{0.2cm}
\end{equation}
Choosing the codebook in this way can be done offline. Each subcodebook will thus maximize the minimum beamforming gain possible, and the details of the subcodebook design are discussed shortly.

Given the subcodebooks, we must now determine how to traverse the subcodebooks.
After reception of round $k,$ the optimal beam $\bff_{opt,[k]}$ in $\cF_k$ is chosen according to the beam alignment algorithms using multiples of $L_K$ observations. The optimal beamformer and combiner at round $k$ are denoted by $\bff_{opt,[k]} = \bff_{opt,(2(k-1)+1)L_K}$ and $\bz_{opt,[k]} = \bz_{opt,2(k-1)L_K}$, respectively. 
To utilize the hierarchical structure between the subcodebooks, the $L_K$ beams in $\cF_{k+1}$ closest to $\bff_{opt,[k]},$  denoted by $\cF_{(k+1)\mid \bff_{opt,[k]}}$ are sounded for round $k+1$. This means
\begin{equation}
\cF_{(k+1)\mid \bff_{opt,[k]}} = \argmax_{\{\bff_{i_1},\ldots,\bff_{i_{L_K}}\}\subseteq \cF_{k+1} : i_1<\cdots<i_{L_K}}
\min_{i\in\{i_1,\ldots,i_{L_K}\}} \left|\bff_{opt,[k]}^* \bff_i\right|^2. 
\label{eq:SubcodebookSelection}\vspace{0.1cm}
\end{equation}
This method of sounding has a graphically appealing interpretation when $\cF_k$ consists of vectors formed by uniformly quantizing the possible angles of departure. The next level sounded beams in $\cF_{(k+1)\mid \bff_{opt,[k]}}$ are uniformly spanned within the sector covered by the optimal beamformer at the previous level, $\bff_{opt,[k]}$. 
Fig. \ref{fig:tree_codebook} demonstrates the structure and relationship between the subcodebooks. Each subcodebook consists of $N_k$ codewords, and the $L_K$ expansion from the selected optimal codeword in every level is shown.
In the same manner, the optimal combiner at round $k$, $\bz_{opt,[k]}$, is selected by sounding through the subcodebooks of combiners $\cZ_1,\cZ_2,\cdots,\cZ_K$.
\begin{figure}[htbp]
	\vspace{-0.5cm}
	\centering
		\includegraphics[height=2in]{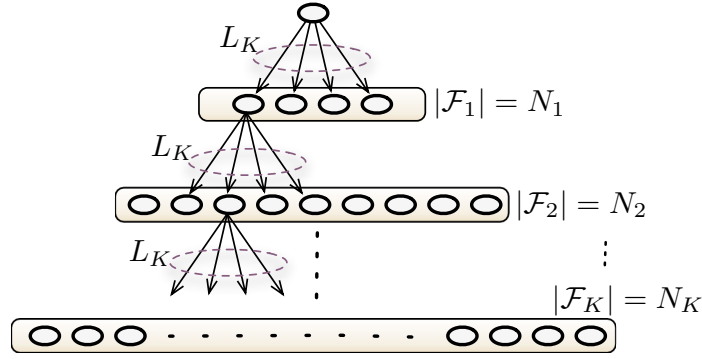}
		\vspace{-0.5cm}
	\caption{Hierarchical structure of subcodebooks $\cF_k$ and expansions between subcodebooks.}
	\label{fig:tree_codebook} 
	\vspace{-0.5cm}
\end{figure}

After adaptive $K$-round ping-pong sampling and sounding through traversing the beamformer and combiner of subcodebooks in \eqref{eq:SubcodebookSelection}, the best beam pair $\bff_{opt,L} \in \cF_K$ and $\bz_{opt,L} \in \cZ_K$ is obtained.

\noindent\textbf{Subcodebook Design}\\
\indent For efficient adaptive alignment and sampling, optimized subcodebooks for $\cF$ are required as described in \eqref{eq:CodebookDesign}. Each subcodebook provides increased beamforming gain and improved subspace sampling. The size and design of the subcodebook will influence the beamforming gain. For convenience, we restrict the discussion to a one-dimensional ULA.  However, these techniques are obviously more generally extendable to a two-dimensional ULA using the Kronecker product in \eqref{eq:2DURA}.  

First note that the covering distance inner product can be bounded.

\begin{lemma}
The covering distance inner product of the subcodebook $\cF_k = \{\bff_1,\ldots,\bff_{N_k}\}$  for an $M$ antenna one-dimensional ULA is bounded by 
\begin{equation}
\chi\left( \cF_k \right) \leq \min\left(\frac{ 2\pi N_k}{\mu(\cP)},M \right) \label{eq:Lemma1}
\end{equation}
where $\mu(\cP) = \int_{-\pi}^\pi \textbf{1} (\psi\in\cP) d\psi$ with $\textbf{1} (\cdot) $ denoting the indicator function, 
$\cP = \{ \psi:\psi = 2\pi \frac{d}{\lambda}\sin(\theta), \  \theta \in \cT \}$, and $\cT$ is the set of possible angles of departure of the ULA.
\end{lemma}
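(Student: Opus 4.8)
The plan is to prove the two arguments of the minimum separately. The bound $\chi(\cF_k)\leq M$ is immediate from the Cauchy--Schwarz inequality: each beamformer $\bff$ is unit norm by construction in \eqref{eq:beam_f}, while every manifold vector $\ba\in\cA$ in \eqref{eq:1DULA} has $\|\ba\|^2 = M$ since its $M$ entries all have unit modulus. Hence $|\bff^*\ba|^2 \leq \|\bff\|^2\|\ba\|^2 = M$ for every pair, so the inner maximum, and therefore $\chi(\cF_k)$, never exceeds $M$.

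The substance is the bound $\chi(\cF_k)\leq 2\pi N_k/\mu(\cP)$, which I would establish by an averaging argument over the array manifold. First I would reparametrize $\cA$ by the spatial frequency $\psi = 2\pi\frac{d}{\lambda}\sin(\theta)\in\cP$, so that $\ba(\psi) = [1,e^{j\psi},\ldots,e^{j(M-1)\psi}]^T$. Since $\cA$ is in bijection with $\psi\in\cP$ (the second entry $e^{j\psi}$ recovers $\psi$ uniquely on $[-\pi,\pi)$) and the minimum of a function is at most its mean,
\[
\chi(\cF_k) = \min_{\psi\in\cP}\,\max_{\bff\in\cF_k}|\bff^*\ba(\psi)|^2 \leq \frac{1}{\mu(\cP)}\int_{\cP}\max_{\bff\in\cF_k}|\bff^*\ba(\psi)|^2\,d\psi.
\]

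The key step is to dispose of the troublesome inner maximum by replacing it with the sum $\max_{\bff\in\cF_k}|\bff^*\ba(\psi)|^2 \leq \sum_{\bff\in\cF_k}|\bff^*\ba(\psi)|^2$, which is valid because every term is nonnegative. I would then interchange the finite sum with the integral and evaluate each single-beam integral. Writing $\bff = [w_0,\ldots,w_{M-1}]^T$ and expanding $|\bff^*\ba(\psi)|^2 = \sum_{m,n}\bar{w}_m w_n e^{j(m-n)\psi}$, the orthogonality relation $\int_{-\pi}^{\pi} e^{j(m-n)\psi}\,d\psi = 2\pi\,\delta_{mn}$ collapses the double sum to $2\pi\|\bff\|^2 = 2\pi$. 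Because the integrand is nonnegative and $\cP\subseteq[-\pi,\pi]$ (as the definition of $\mu(\cP)$ presupposes), shrinking the integration domain from $[-\pi,\pi]$ to $\cP$ can only decrease the value, so $\int_{\cP}|\bff^*\ba(\psi)|^2\,d\psi \leq 2\pi$ for each $\bff$. Summing over the $N_k$ codewords gives $\int_{\cP}\max_{\bff\in\cF_k}|\bff^*\ba(\psi)|^2\,d\psi \leq 2\pi N_k$, and dividing by $\mu(\cP)$ yields the claimed bound. Taking the smaller of the two bounds completes the proof.

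I would expect the only delicate point to be the max-to-sum relaxation: it is what makes the orthogonality trick usable, but it is also the source of any looseness in the estimate, so I would flag that the bound is tight only when the beams have essentially disjoint supports in $\psi$ and the single-beam energy is concentrated on $\cP$.
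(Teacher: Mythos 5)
Your proof is correct, but it reaches the bound by a different route than the paper. The paper partitions the manifold into Voronoi-type sectors $\vartheta_i$ of angles closest to each $\bff_i$, bounds the minimum gain in sector $i$ by the sector average $\int_{\vartheta_i}|\bff_i^*\ba(\psi)|^2 d\psi / \mu(\vartheta_i) \leq 2\pi/\mu(\vartheta_i)$ (using the same Parseval identity you use), and then invokes a pigeonhole step: since the sector measures sum to $\mu(\cP)$, at least one sector has measure at least $\mu(\cP)/N_k$, which yields $2\pi N_k/\mu(\cP)$; the $M$ bound is handled by Cauchy--Schwarz exactly as you do. You instead average the pointwise maximum over all of $\cP$ and relax $\max_{\bff}|\bff^*\ba(\psi)|^2 \leq \sum_{\bff}|\bff^*\ba(\psi)|^2$ so that Parseval applies term by term. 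The two arguments are cousins -- both ultimately spend the total beam-pattern ``area'' $2\pi$ per codeword across the manifold -- but yours avoids constructing the partition and, in particular, sidesteps the paper's somewhat delicate final step about the bound being ``minimized when the sectors are equally divided,'' which as written conflates optimizing the bound over partitions with bounding a fixed codebook; your max-to-sum relaxation gives the uniform bound for an arbitrary codebook in one clean line. The price is exactly the looseness you flag: the sum overcounts wherever beam supports overlap, whereas the sectored argument localizes each codeword to its own region; at the level of this lemma both give the identical bound, so nothing is lost.
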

\begin{proof}
For the ULA, the beam pattern specifies the inner-product with vectors on the array manifold.  The beam pattern for a vector $\bff$ is given by $G_{\bff}(\psi) = \left|\sum_{m=0}^{M-1} f_m e^{jm\psi}\right|^2.$
By Parseval's theorem, $\frac{1}{2\pi}\int_{-\pi}^\pi G_{\bff}(\psi) d\psi = \|\bff\|^2.$ 

Because the area under the beam pattern is bounded, we can use a sectored approach to understand the covering distance.  
Let the sector region $\vartheta_i$ define the set of array manifold vectors ``closest" to $\bff_i.$  Mathematically, this means 
$\vartheta_i = \left\{\psi \in \cP : |\ba^*(\psi)\bff_i| > |\ba^*(\psi)\bff_j| \ \ \textrm{for} \ \  i \neq j\right\}.$ 
Clearly, the area under the sector of angles $\vartheta_i$ is bounded by the entire beam pattern area, which is written as
$
\int_{\vartheta_i} |\bff_i^* \ba(\psi)|^2 d\psi \leq \int_{-\pi}^{\pi} |\bff_i^* \ba(\psi)|^2 d\psi  = 2\pi.
$
Using this, the bounded beam gain on $i$th sector is given by
\vspace{-0.4cm}
\beq
\min_{\psi \in \vartheta_i}|\bff^*_i \ba(\psi)|^2 &\leq& \frac{\int_{\vartheta_i} |\bff_i^* \ba(\psi)|^2 d\psi}{\mu(\vartheta_i)} \nonumber\\
&\leq& \frac{2\pi}{\mu(\vartheta_i)} \nonumber
\eeq
where $\mu(\vartheta_i)$ is the length of angle interval $i$. The absolute inner product term $\chi(\cF_k)$ is written by the bounded area, which is given by
\beq
\chi\left( \cF_k = \left\{\bff_1, \cdots , \bff_{N_k}\right\} \right) 
&=& \max_{i\in\{1,\ldots,{N_K}\}} \min_{\psi \in \vartheta_i} \left| \bff_i^* \ba(\psi)\right|^2  \nonumber\\
&\leq& \max_{i\in\{1,\ldots,{N_K}\}} \frac{2 \pi}{\mu(\vartheta_i)} \nonumber\\
&=&  \frac{2 \pi}{\min_{i\in\{1,\ldots,{N_K}\}}\mu(\vartheta_i)} 
\label{eq:lemmaPF2}. 
\eeq
The expression in \eqref{eq:lemmaPF2} is minimized when $\mu(\vartheta_i), i=1,\ldots,{N_K}$ is equally divided as $\frac{\mu(\cP)}{N_k}$. Thus, $\chi(\cF_k)$ is bounded using the covering distance for the case of equally sized sector regions.
Additionally, the beamforming gain is bounded by $| \bff_i^* \ba(\psi)|^2 \leq M$.  Therefore, the metric $\chi(\cF)$ in \eqref{eq:lemmaPF2} satisfies 
$ \chi\left( \cF_k \right) \leq \min\left(\frac{ 2\pi N_k}{\mu(\cP)},M \right)$.
\end{proof}

The upper bound in \eqref{eq:Lemma1} corresponds to  beamforming vectors with beam patterns that do not overlap.  Though beam patterns of this form are usually unrealizable,  $\cF_1, \cF_2, \cdots, \cF_{(K-1)}$  will each be designed in an attempt to have collective beam patterns approximating those shown in  Fig. \ref{fig:QuantizedArrayManifold} assuming $\cT$ defines a sector of angle directions.
Therefore, beamforming vector $i$ in subcodebook $\cF_k$ will have a beam direction corresponding to 
\vspace{-0.3cm}
\begin{equation}
\psi_i = \psi_{LB} + \left(\frac{\psi_{UB}-\psi_{LB}}{2 N_K} \right)+ \frac{i (\psi_{UB} - \psi_{LB})}{N_K}, \ \ i = 0, 1, \ldots, N_K-1 .\label{eq:psi_i} 
\end{equation}
Physically, this corresponds to an angle-of-arrival (or departure) $\theta_i$  calculated as $\theta_i = \sin^{-1}\left( \frac{\psi_i}{2\pi d/\lambda}\right)$.
\begin{figure}[htbp]
	\vspace{-1.0cm}
	\centering
		\includegraphics[height=1.3in]{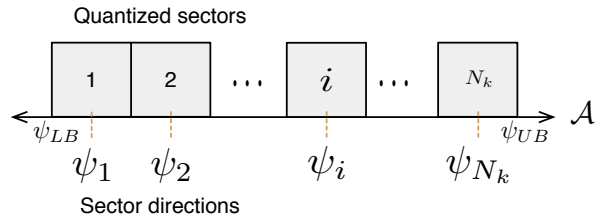}
		\vspace{-0.5cm}
	\caption{Sectorized array manifold for subcodebook design at stage $k$.}
	\label{fig:QuantizedArrayManifold} 
	\vspace{-0.5cm}
\end{figure}

After dividing sectors in $\cA$, a beam ``flattened'' to have an omni-directional pattern within each sector is desired.  Flattened omni-directional beams can be generated using techniques such as the subarray method in \cite{Mailloux:2007wm}, which also employs beam-spoiling techniques similar to \cite{Kinsey:1997eq,mailloux:2005phased}. This technique divides  the $M$ element array into $M_{sub}$ subarrays  each of size $M/M_{sub}$. 
The subbeams are each defocused with a small offset angle, then they are summed for broadening. For a one-dimensional ULA, the broadened beamformer $\bff_{br}$ is given by
\vspace{-0.3cm}
\beq
\bff_{br} = \left[ {\bff}_{comp,1}^T \ {\bff}_{comp,2}^T \ \cdots \ {\bff}_{comp,\left(M/M_{sub}\right)}^T  \right]^T \vspace{-1.5cm}
\eeq
where $\bff_{br} \in \C^{M \times 1}$ and ${\bff}_{comp,j} \in \C^{M_{sub} \times 1}$ is a subarray for a component subbeam. The component subbeams are each pointed in slightly different directions to broaden the beam before application of a defocusing angle $\theta_{sp}.$ Note that the broadened beamformer $\bff_{br}$ is a unit norm vector and each element is controlled by analog RF beamforming in \eqref{eq:beam_f}. Each broadened beam constructs subcodebook, $\bff_{br} \in \cF_k$.

An example beam pattern for a single sector assuming $M = 32$  elements is  shown in Fig. \ref{fig:subfigA1}. The array is grouped into eight subarrays (i.e., $M_{sub} = 8$) and four component subbeams ${\bff}_{comp,j}$ using a 5 bit quantized phase-shifter for each element. Then, the superposition of component subbeams is defocused and pointed in a beam direction $\theta_i = 15^\circ$.

With this approach,  we optimize $\chi(\cF_k)$ over $M_{sub}$ and $ \theta_{sp}$ to design subcodebook $\cF_k$ in \eqref{eq:CodebookDesign}. An example subcodebook for the $\Mt=32$ case with $N_k=16$ is shown in Fig. \ref{fig:subfigA2}. 
% The designed $N_k$ beams in codebooks are plotted to the $N_k$ quantized directions.
The array manifold $\cA$ is uniformly quantized to $N_k=16$ directions, and the subarray parameters are optimized to $M_{sub}^* = 2$ and $\theta_{sp}^* = 1.72^\circ$.
\begin{figure}[htbp]
\centering
\subfigure[ ]{
	\includegraphics[height=1.70in]{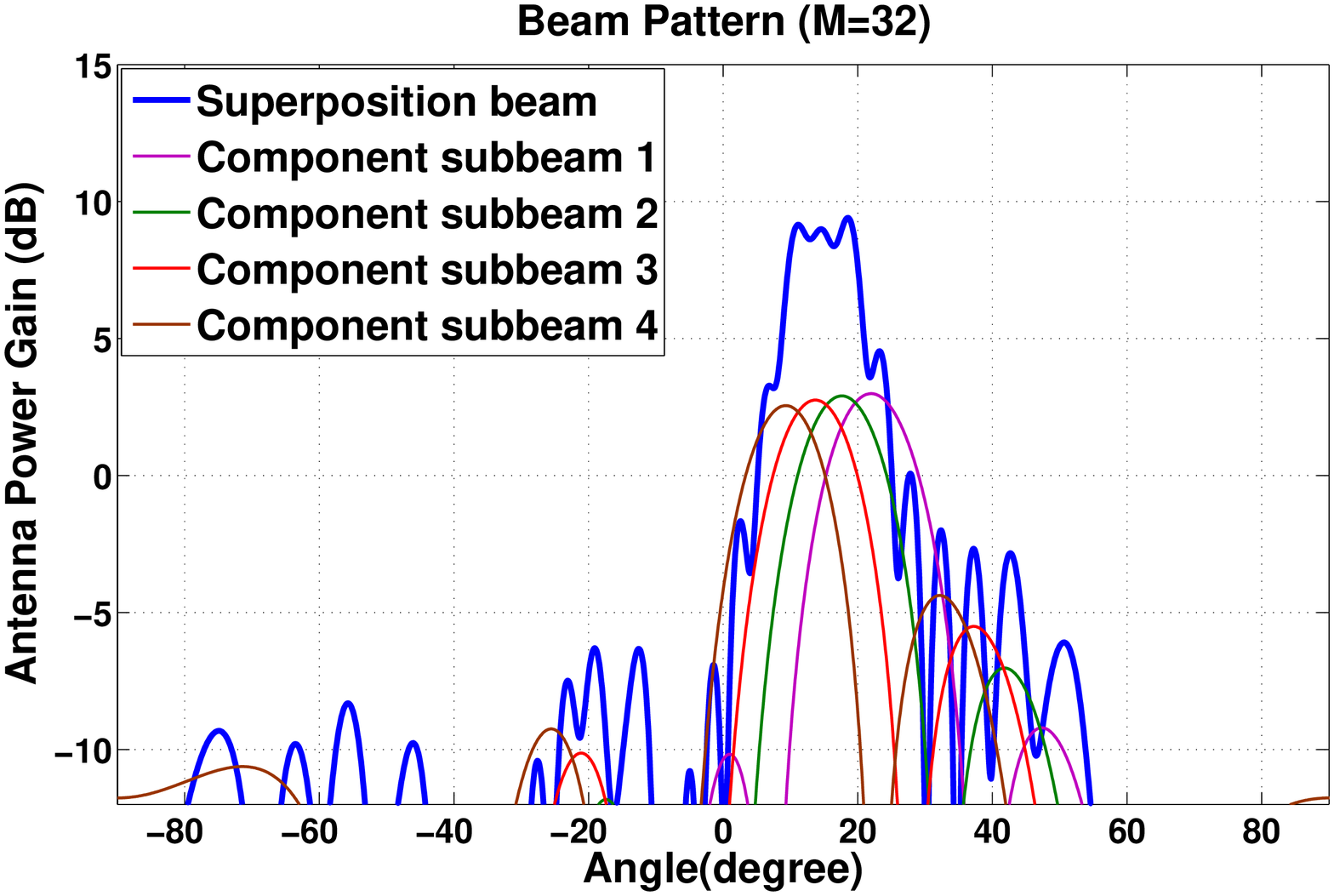}
    \label{fig:subfigA1}
}
\subfigure[ ]{
	\includegraphics[height=1.8in]{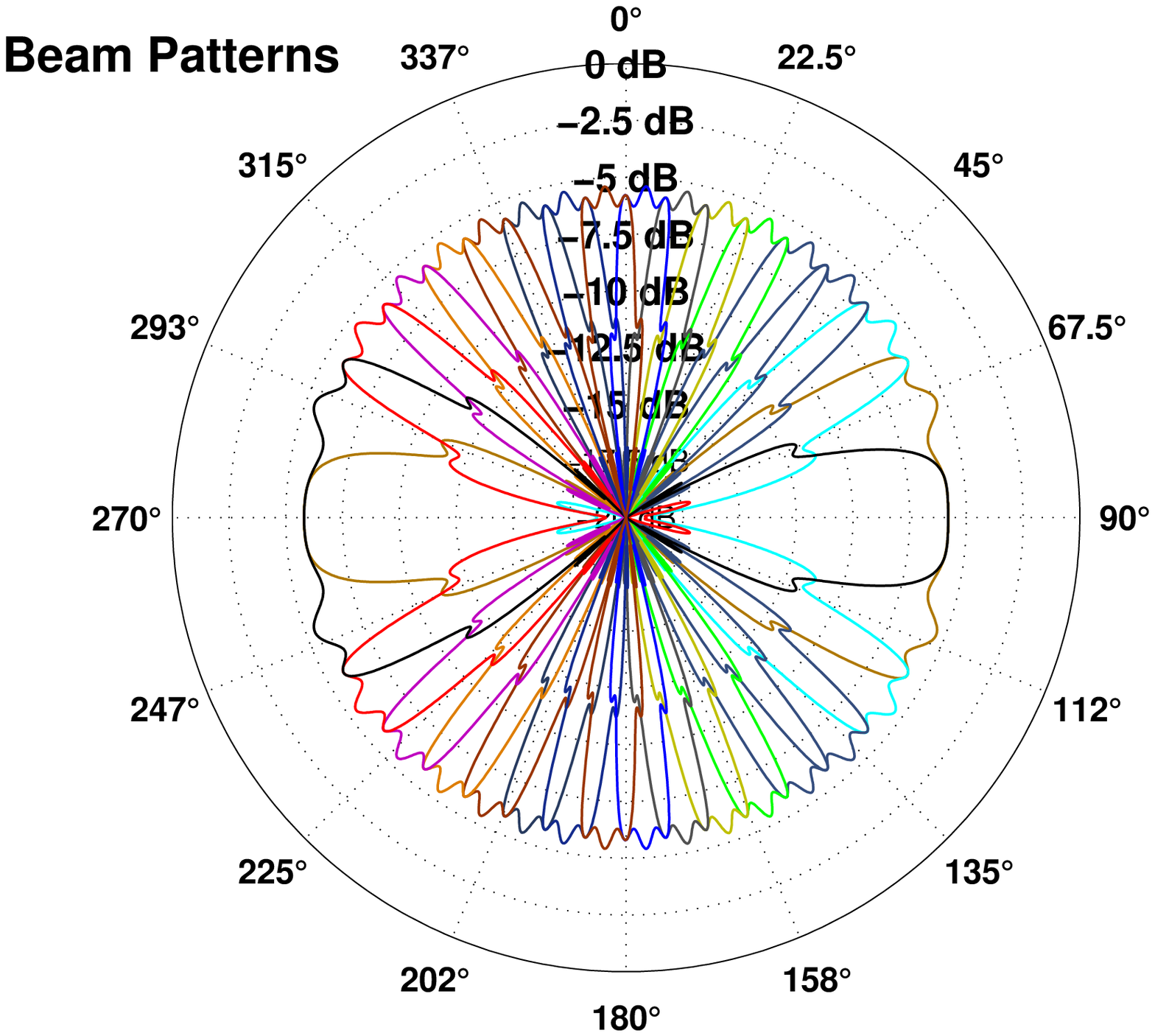}
    \label{fig:subfigA2}
}\vspace{-0.18cm}
\caption{(a) A flattened beam is designed using subarraying and beam spoiling. The blue flattened beam is the superposition of four subarray component beams. (b) Example of subcodebook $\cF_k$ with $N_k=16$ and $q=5$ bits in array size $\Mt=32$. Normalized beam patterns are plotted in polar coordinates. }%\subref{fig:subfigB1}, \subref{fig:subfigB2}}
\label{fig:CodebookEx2}
\vspace{-0.7cm}
\end{figure}

%==========================================================================================================================
%==========================================================================================================================
\section{Simulations and Discussions} \label{SimulationStudy}
%==========================================================================================================================
%==========================================================================================================================
We compare the performance of various  beamforming methods using Monte Carlo simulations. 
To evaluate the performance the spatial channel, including angular spread and multipath, is modeled using a street geometry as described in \cite{HongZhang:2010dv}. We consider three multipath components including a line-of-sight (LOS) path and the first order non-line-of-sight reflection from each side of the street. The delays and the angular spread of the reflected paths are calculated using the distance differences and ray-tracing of the geometry model. For the performance comparison, we assumed there was no wind misalignment (i.e., $\overline{u}=0$). The channel is modeled as a Rician channel with K-factor $\mathpzc{K},$ the ratio of the energy in the LOS path to the sum of the energy in other non-LOS paths. For our simulations, the K-factor $\mathpzc{K}$ is set to $13.2$ dB from the channel observation in \cite{MuhiEldeen:2010jl}.

\subsection{Performance Comparison}

Let the average beamforming gain be defined as the SNR gain for a system employing beamformer  $\bff_{opt}$ and  combiner $\bz_{opt}$ as
\vspace{-0.3cm}
\beqn
G_{BF} = E \left[  \left|\bz_{opt}^* \bH \bff_{opt} \right|^2 \right] .
\eeqn
The performance of a non-adaptive joint alignment, a single-sided alignment, and the proposed adaptive sampling alignment were simulated for various settings.
The non-adaptive joint alignment sounds all possible pairs of beamformers and combiners in \eqref{eq:NonAdapSubSampling} using codebooks with sizes given by $\textrm{card}(\cF) =\textrm{card}(\cZ) \approx  \sqrt{L} $. Then, the optimal beamformer $\bff_{opt}$ and combiner $\bz_{opt}$ are chosen according to \eqref{B1.1} and observations in \eqref{eq:NonAdapSubSampling}. The single-sided alignment samples the subspace by searching for $\bff_{opt}$ given a fixed combiner and then searching for $\bz_{opt}$ for a fixed beamformer. The sampling requires  codebooks of sizes $\textrm{card}(\cF) =\textrm{card}(\cZ)= L/2$. After sampling with codebooks, the optimal beamformer $\bff_{opt}$ and the optimal $\bz_{opt}$ are also selected. This alignment approach is currently used in IEEE 802.11ad \cite{IEEE:80211ad}. \textit{We adjust the codebook and subcodebook sizes used in the adaptive sampling such that the total search time $L$ is the same for all the schemes.}

For both the non-adaptive joint search and single-sided search, the transmit and receive array weight codebooks are constructed by quantizing the sets of departure and arrival angles. Both the transmit and receive arrays are assumed to be ULAs. For example, this means the codebook for the transmit beamformer has an $i$th beamformer given by
\vspace{-0.3cm}
\beqn
\bff_i = \frac{1}{\sqrt{\Mt}}\left[ 1 \ \ e^{-j 2 \pi (d/\lambda) \sin(\theta_i)}  \ \ \cdots \ \ e^{-j (\Mt-1) 2\pi (d/\lambda) \sin(\theta_i)}\right]^T
\eeqn
where 
$\theta_i \in \cT$ is the $i$th angle in a uniformly quantized set in \eqref{eq:psi_i}.
The coverage for the sector is specified with  $\cT = \left[-\frac{\pi}{2}, \frac{\pi}{2} \right]$ for a half-wavelength spaced ULA.
Both the beamformer and combiner are implemented on each element using 5 bit phase quantization.
  
\begin{figure}[htbp]
	\centering
		\includegraphics[height=3.0in]{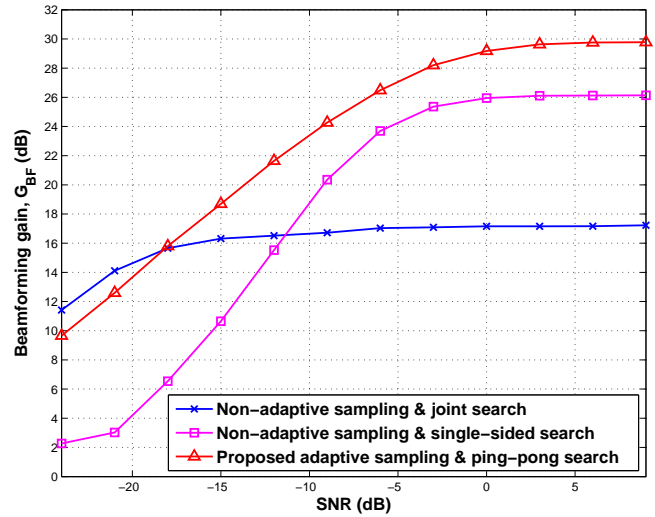}
	\caption{Comparison of beamforming gain versus SNR for $M=32$ array, $L=48$.}
	\label{fig:Results1}
\end{figure}
In the first simulation, with results shown in Fig. \ref{fig:Results1}, the beamforming gain performance of the alignment methods  is shown versus SNR for a fixed value $L = 48.$  The transmitter and receiver both use a ULA of size $M=32.$  For our proposed algorithm,  we design the subcodebooks using ping-pong sounding with  $K=3$ and with $L_K = 8$. The three subcodebooks have sizes, $N_1 = 8$,  $N_2=32,$ and $N_3 = 64.$
The non-adaptive sampling codebook has seven vectors, and the single-sided sampling codebook is of size 24.  
At high SNR,  the proposed adaptive alignment method has around a $4$ dB gain improvement over single-sided alignment and more than a $13 \ \textrm{dB}$ gain over the non-adaptive joint alignment. From this result, the operating SNR is set to $5$ dB for all other simulations.
 
\begin{figure}[htbp]
	\vspace{-0.3cm}
	\centering
		\includegraphics[height=3.0in]{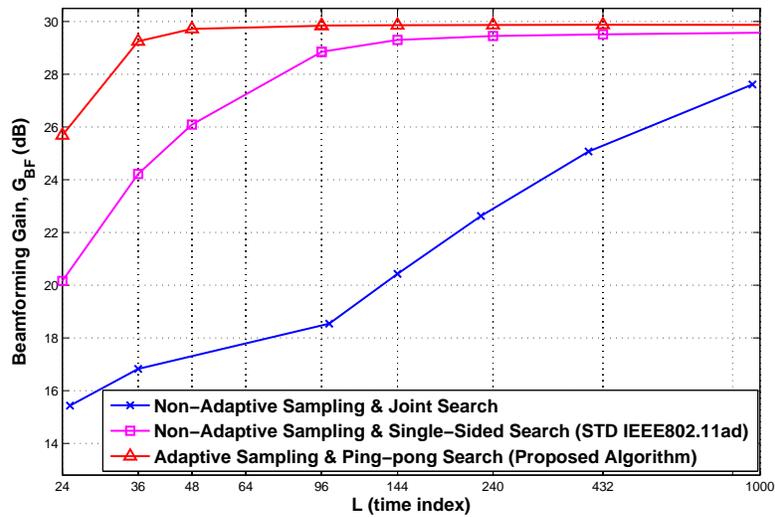}
	\caption{Comparison of beamforming gain versus beam search time $L$ for $M=32$ array, SNR = $5$ dB.}
	\label{fig:Results2}\vspace{-0.7cm}
\end{figure}
In order to investigate the effect of the size of subspace samples, Fig. \ref{fig:Results2} compares the beamforming gain versus the search time $L$. 
Also, the codebook and subcodebook sizes are a function of the search time $L.$
Non-adaptive joint sampling and non-adaptive single-sided sampling choose the codebook sizes as $\sqrt{L}$ and $L/2$, respectively. In adaptive sampling, the number of sounding vectors at each level in the hierarchy is set as a function of $L$
due to the relationship between $L$ and $L_K$ with $K$, i.e., $L_K = \frac{L}{2K}$.
For example, two-level codebooks (i.e., $K=2$) with $L_K=6$ are used for the small search time $L=24$ because $L$ is not large enough to sound in multiple levels. In $L=36$ and $L=48$, three-level sounding and sampling $K=3$ is performed with $L_K=6$ and $L_K=8$, respectively. For $L$ greater than $96$ as shown in Fig. \ref{fig:Results2},
subcodebooks with levels $K=4$ or greater are used for adaptive sampling and sounding with the proper $L_K$.
The transmitter and receiver both use ULAs of size $M=32,$ and the SNR is fixed at $5$ dB. 
Note that the beamforming gain reaches its bound $G_{BF,max} = 10\log_{10}(\Mt \Mr)$ as $L$ grows large. 
When the search time $L$ grows large, 
the subcodebook size of the last level, $N_K$ (with $N_K\leq(L_K)^K$), is much bigger than the array size $M$, and the beamforming gain is bounded by the full CSI beamforming gain.
To compare performance, consider a target beamforming gain of $26$ dB. To  achieve this gain,  the adaptive alignment algorithm only requires $L \approx 25$.  In contrast,  single-sided alignment requires $L \approx 47,$ and non-adaptive joint alignment requires $L \approx 585$.
The adaptive sampling and alignment method using the hierarchy relationship in \eqref{eq:SubcodebookSelection} efficiently estimates the optimal beamformer-combiner pairs.
The proposed adaptive sampling and beam alignment scheme also allows the alignment to be accomplished with a smaller search time $L$ than the alignment method used in IEEE 802.11ad \cite{IEEE:80211ad}. This advantage of the proposed technique is even more significant for larger arrays, which are simultaneously more susceptible to wind sway and require a larger codebook size with $L \ll \Mr\Mt$.

\begin{figure}[htbp]
	\centering
		\includegraphics[height=3.0in]{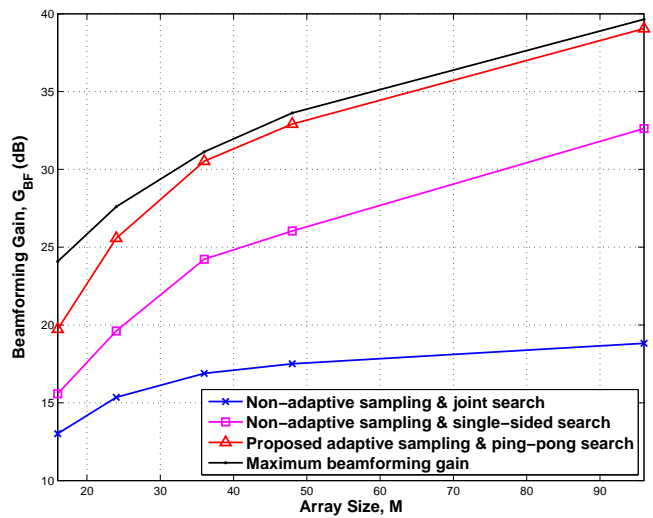}\vspace{-0.5cm}
	\caption{Comparison of beamforming gain versus array size $M=\Mt=\Mr$ for $L=M$, SNR = $5$ dB.}
	\label{fig:Results3}\vspace{-0.7cm}
\end{figure}
The size of the array also plays a major role in beamforming gain. Fig. \ref{fig:Results3} demonstrates the beamforming gain as a function of the transmit and receiver array size $M.$ In this plot, the SNR is  $5$ dB, and $L$ varies with the array size according to $L = M$. 
The subcodebooks for $M$ are designed with $K=3$, except the small array sizes $M=16$ and $M=24$ with $K=2$.
 The aligned beamforming gain is compared with the theoretical array gain, which is   $10 \log_{10}(\Mt\Mr)$ at perfect alignment. From the simulation results it can be seen that there is a consistent performance gap between the adaptive alignment and the single-sided alignment. The adaptive alignment scheme outperforms the non-adaptive joint alignment for all array sizes. 
In Fig. \ref{fig:LinkGain}, a target beamforming gain is roughly calculated as $29$ dB at $100$ m distance. Utilizing the adaptive alignment, the system with $M=32$ is able to achieve the same beamforming gain as a much larger array system with $M \approx 70$ using the single-sided alignment.

Fig. \ref{fig:Results4} compares the beamforming gain versus the average wind speed with different array sizes. 
The beamforming gain $G_{BF}$ is averaged over different wind-environment realizations.
The figure demonstrates that increasing the array size in only one dimension (e.g., in a uniform linear array) comes with a possibly severe penalty from wind sway. In the case of $M=96$, the performance degradation due to the wind sway misalignment is up to 10 dB at 40 m/s wind speed.
\begin{figure}[htbp]
	\vspace{-0.7cm}
	\centering
		\includegraphics[height=2.8in]{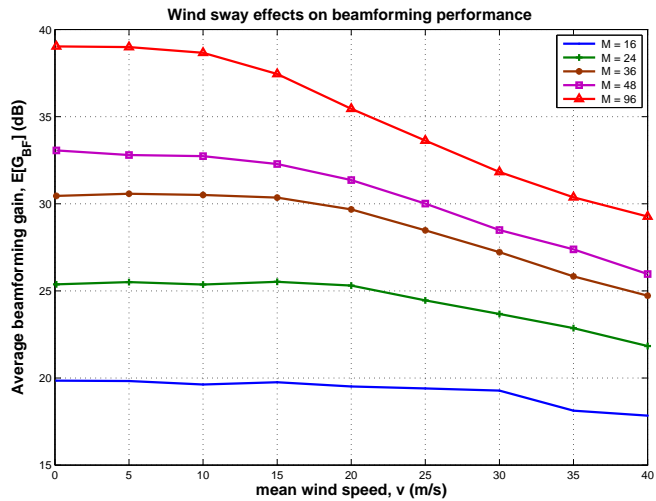}\vspace{-0.5cm}
	\caption{Average beamforming gain with wind sway effect.}
	\label{fig:Results4}%\vspace{-0.7cm}
\end{figure}

%==========================================================================================================================
%==========================================================================================================================
\vspace{-0.8cm}
\section{Conclusions} \label{Conclusions}
\vspace{-0.3cm}
%==========================================================================================================================
%==========================================================================================================================
In this paper, we  studied the use of millimeter wave wireless communication for both backhaul and access in small cell networks.  The longer communication link distances, combined with the severe path loss at millimeter wave frequencies, make aligning the transmit and receive beams a challenging and important problem.  
We addressed the problem of subspace sampling, which provides the observation data needed for the beam alignment algorithms.  Subspace sampling can be done in a non-adaptive or adaptive manner.  Adaptive sampling techniques can leverage previous received data to substantially improve system performance.  Simulations compared our various proposed beam alignment and subspace sampling algorithms.   Because outdoor picocell access points will most likely be mounted to poles, we also discussed the problem of pole sway due to wind and the tradeoff of the array size and achievable receive SNR. We modeled wind-induced impairments in the millimeter wave beamforming system and evaluated their effect on the beamforming gain.

This paper documented the tradeoff between array size and wind-induced movement.  More work is needed in this area.  When the required number of transmit and receive antenna elements is large, it appears to be insufficient to employ large uniform linear arrays.  A more resilient architecture is to use a two-dimensional uniform linear array, which limits the effect of beam misalignment by spacing the array over two dimensions.  

\vspace{-0.5cm}

\appendices
\section{Wind Vibration Modeling}\label{WindModel}
The wind vibration modeling in this section follows the approach in \cite{Piersol:2009,simiu:1996wind,Davenport:1961bb} as illustrated in Fig. \ref{fig:WindModel} showing the relation between wind velocity, drag, and pole response. 
\begin{figure}[b!]
	\centering
	\includegraphics[height=2in]{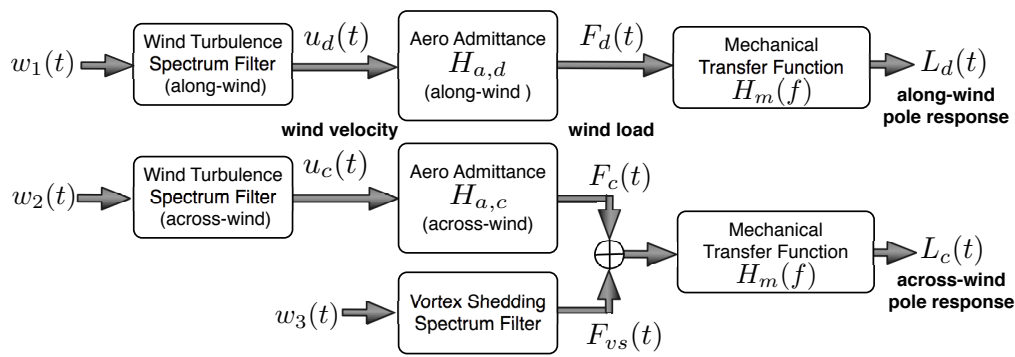}
	\vspace{-0.2cm}
	\caption{Block diagram representation of transfer function mapping wind velocity to pole response.}
	\label{fig:WindModel}
	\vspace{-0.8cm}
\end{figure}
In general, the wind velocity is a time-varying, vector-valued spatial random field $\bu(x,y,z,t)$ where $x$ and $y$ represent the surface dimension variables and $z$ represents height above the surface. Let $\overline{\bu}$ represent the mean wind vector as averaged over a suitable time window. Though $\overline{\bu}$ is a function of the spatial variables we assume: 
1) it is approximately constant over small variations in the surface variables $x$ and $y$ (since we consider only one short link whose distance is on the order of 100 m, 
2) that the height variable $z$ is fixed and equal to 10 m (a reasonable choice for the height of a pole to which the millimeter wave antenna array is mounted), and 
3) that the $z$ component of $\overline{\bu}$ is equal to zero. 
Then following the common practice \cite{Piersol:2009,simiu:1996wind,Davenport:1961bb} we suppress the spatial variables to simplify notation and write the wind velocity as
\vspace{-0.5cm}
\beq
\bu(t) = \overline{\bu} + \bd u_d(t) + \bc u_c(t) \notag
\eeq
where $\bd$ is a unit vector in the direction of the mean wind and $\bc$ is a unit vector orthogonal to the direction of the mean wind. The time-varying components $u_d(t)$ and $u_c(t)$ represent turbulence in the along-wind and across-wind directions, respectively. It is also assumed that there is no turbulence component in the $z$ direction, which is a reasonable assumption near the surface ($z$ = 10 m). Furthermore, the top of the pole to which the antenna array is mounted is approximately constrained to move only in the $(x,y)$ plane.

The turbulence components $u_d(t)$ and $u_c(t)$ are modeled as zero-mean, wide-sense stationary, uncorrelated random processes with power spectral densities given by \cite{simiu:1996wind}
\vspace{-0.25cm}
\[
S_{u_d}(f) = \frac{500 u_*^2}{\pi \overline{u}} \left[ \frac{1}{1+ 500 f /2\pi \overline{u}} \right]^{5/3}, \label{eq:S_longitudinal}
\]
\vspace{-0.5cm}
\[
S_{u_c}(f) = \frac{75 u_*^2}{2\pi \overline{u}} \left[ \frac{1}{1+ 95 f /2\pi \overline{u}} \right]^{5/3} \label{eq:S_lateral}
\]
where $f$ is frequency in Hz, $\overline{u} = \| \overline{\bu} \|$, and $u_* = {\overline{u}}/{ 2.5 \ln\left( 10/{z_0}  \right) }$ is the shear velocity at height $z = 10$ m, obtained from the terrain roughness length parameter $z_0$, which has been characterized in \cite{Wieringa:1986roughness}. For an urban scenario $z_0 = 2$ m. In Fig. \ref{fig:WindModel} the turbulence components are modeled as the outputs of wind spectrum filters when driven by independent, white Gaussian processes.

The wind exerts force on the antenna mounting pole via two mechanisms. The first, and most important, is through drag in response to the mean wind speed and the two orthogonally directed wind speed turbulence components. This effect is modeled in Fig. \ref{fig:WindModel} using the two aero-admittance functions $H_{a,d}$ and $H_{a,c}$. Although the mean wind term $\overline{u}$ exerts a significant drag force we can and do ignore it because of our assumption that the poles on either end of a short link are exposed to the same mean wind components. Therefore, mean wind will cause no relative displacement between the poles and hence no corresponding pointing error. Despite this the magnitude of the mean wind will still be an important factor because the power in the random processes $u_d(t)$ and $u_c(t)$ is proportional to $\overline{u}^2$. The second method by which wind exerts force on a pole is via a phenomenon known as vortex shedding which occurs for aerodynamically shaped bodies such as poles with a circular cross section. The mean wind passing by the pole causes a lifting force in the across-wind direction which is approximately periodic with a period equal to the time between the shedding of vortices from a common side.

The time-varying drag forces due to the orthogonal wind speed turbulence components are given by zero-mean random processes $F_d(t)$ and $F_c(t)$ with power spectral densities \cite{simiu:1996wind}
\vspace{-0.3cm}
\beqn
S_{F_d}(f) &=& \left|H_{a,d}\right|^2 S_{u_d}(f) = \left(2 \kappa \overline{u}\right)^2 S_{u_d}(f) , \\
S_{F_c}(f) &=& \left|H_{a,c}\right|^2 S_{u_c}(f) = \left(\kappa \overline{u}\right)^2 S_{u_c}(f)
\eeqn
where $\kappa = \frac{1}{2} \rho_a C_D A_e$, $\rho_a$ is the density of air in $\mbox{kg/m}^3$, $C_D$ is the coefficient of drag, and $A_e$ is the effective area in $\mbox{m}^2$.

The vortex shedding component of the force also acts in the across-wind direction and so it is added to the across-wind force due to drag as shown in Fig. \ref{fig:WindModel}. Vortex shedding is characterized by the vortex-shedding frequency $f_{vs}$ given by \cite{Yam:1997vx,Caracoglia:2007}
\vspace{-0.2cm}
\beq
f_{vs} &=& \cS\frac{\overline{u}}{d_p} \label{eq:VortesShedding}
\eeq
where $\cS$ is the Strouhal number, a constant dependent on the shape of the body, and $d_p$ is the diameter of the pole. For a pole of circular cross-section \cite{Yam:1997vx} suggests $\cS=0.2$. The power spectral density of the across-wind force due to vortex-shedding can be written
\vspace{-0.1cm}
\beq
S_{F_{vs}}(f) = \kappa^2 \frac{1.125}{\sqrt{\pi} f f_{vs}} \exp \left( - \left[ \frac{1 - f/f_{vs}}{0.18} \right]^2 \right) .\vspace{0.1cm}
\eeq
Details and certain parameter choices are further discussed in \cite{simiu:1996wind}.
The vortex shedding frequency is calculated as $f_{vs} = 12$ Hz with  $d_p=50$ cm in \eqref{eq:VortesShedding}. 

The various component forces then drive a mechanical model for the antenna-mounting pole. We assume that the along wind and across wind forces are independent and that there is no coupling between pole dynamics in the two directions. The pole is modeled as a simple spring-mass-damper system \cite{simiu:1996wind} characterized by a damping coefficient and a natural frequency, $\zeta$ and $f_n$ in Hz, respectively. The mechanical transfer function is given by
\vspace{-0.4cm}
\begin{equation}
H_m(f) = \frac{1}{ 4 m \pi^2 f_n^2 \left( \left[1 - (f/f_n)^2\right]^2 + 4 \zeta^2 (f/f_n)^2 \right)^{1/2} } \vspace{0.1cm}\nonumber
\end{equation}
where $m$ is the mass of the light and pico-cell antenna on the top of the pole. Pole responses in the along-wind and across-wind directions are computed as the outputs of mechanical transfer function systems when driven by wind force random processes as illustrated in Fig. \ref{fig:WindModel}. Finally, the simulation is computed in the spectral domain from the pole-displacement power spectral densities (e.g., $S_{L_d}(f) = |H_m(t)|^2 S_{F_d}(f)$, $S_{L_c}(f) = |H_m(t)|^2 S_{F_c}(f)$), using the spectral representation method and the inverse FFT as in \cite{shinozuka:1991simulation}. The simulation can represent a pole-response bandwidth up to 10 Hz.

\bibliographystyle{IEEEtran}
\scriptsize{\bibliography{IEEEabrv,Multilevel_codebook_JNL_Wind_rev}}

% Generated by IEEEtran.bst, version: 1.13 (2008/09/30)
\begin{thebibliography}{10}
\providecommand{\url}[1]{#1}
\csname url@samestyle\endcsname
\providecommand{\newblock}{\relax}
\providecommand{\bibinfo}[2]{#2}
\providecommand{\BIBentrySTDinterwordspacing}{\spaceskip=0pt\relax}
\providecommand{\BIBentryALTinterwordstretchfactor}{4}
\providecommand{\BIBentryALTinterwordspacing}{\spaceskip=\fontdimen2\font plus
\BIBentryALTinterwordstretchfactor\fontdimen3\font minus
  \fontdimen4\font\relax}
\providecommand{\BIBforeignlanguage}[2]{{%
\expandafter\ifx\csname l@#1\endcsname\relax
\typeout{** WARNING: IEEEtran.bst: No hyphenation pattern has been}%
\typeout{** loaded for the language `#1'. Using the pattern for}%
\typeout{** the default language instead.}%
\else
\language=\csname l@#1\endcsname
\fi
#2}}
\providecommand{\BIBdecl}{\relax}
\BIBdecl

\bibitem{Hur:2011}
S.~Hur, T.~Kim, D.~Love, J.~Krogmeier, T.~Thomas, and A.~Ghosh, ``{Multilevel
  millimeter wave beamforming for wireless backhaul},'' in \emph{Proc. IEEE
  Globecom Workshops}, Dec. 2011.

\bibitem{SmallCellForum:WhtPaper}
\BIBentryALTinterwordspacing
{Small Cell Forum}, ``{Small cells - what's the big idea?}'' Feb. 2012.
  [Online]. Available: \url{http://smallcellforum.org}
\BIBentrySTDinterwordspacing

\bibitem{Chia:2009gs}
S.~Chia, M.~Gasparroni, and P.~Brick, ``{The next challenge for cellular
  networks: Backhaul},'' \emph{IEEE Microwave Magazine}, vol.~10, no.~5, pp.
  54--66, 2009.

\bibitem{IEEE:2009hg}
IEEE, ``{IEEE Standard - Part 15.3 : Wireless MAC and PHY Specifications for
  High Rate WPANs Amendment 2: Millimeter-wave-based Alternative Physical Layer
  Extension},'' 2009.

\bibitem{IEEE:80211ad}
------, ``{IEEE802.11-10/0433r2, PHY/MAC Complete Proposal Specification (TGad
  D0.1)},'' 2010.

\bibitem{Wang:2009jn}
J.~Wang, Z.~Lan, C.-W. Pyo, T.~Baykas, C.-S. Sum, M.~A. Rahman, J.~Gao,
  R.~Funada, F.~Kojima, H.~Harada, and S.~Kato, ``{Beam codebook based
  beamforming protocol for multi-Gbps millimeter-wave WPAN systems},''
  \emph{IEEE J. Sel. Areas in Communications}, vol.~27, no.~8, pp. 1390--1399,
  Oct. 2009.

\bibitem{Cordeiro:2010if}
C.~Cordeiro, D.~Akhmetov, and M.~Park, ``{IEEE 802.11ad: Introduction and
  Performance Evaluation of the First Multi-Gbps WiFi Technology},'' in
  \emph{Proc. mmCom '10: ACM Int. Workshop on mmWave Comm.}, Sep. 2010.

\bibitem{HongZhang:2010dv}
H.~Zhang, S.~Venkateswaran, and U.~Madhow, ``{Channel Modeling and MIMO
  Capacity for Outdoor Millimeter Wave Links},'' in \emph{Proc. IEEE WCNC},
  2010.

\bibitem{trees2002detection}
H.~L.~V. Trees, \emph{{Detection, Estimation, and Modulation Theory: Optimum
  Array Processing}}.\hskip 1em plus 0.5em minus 0.4em\relax Wiley, 2002.

\bibitem{Piersol:2009}
A.~Piersol and T.~Paez, \emph{{Shock and Vibration Handbook}}, 6th~ed.\hskip
  1em plus 0.5em minus 0.4em\relax McGraw Hill, 2009.

\bibitem{simiu:1996wind}
E.~Simiu and R.~Scanlan, \emph{Wind Effects on Structures}, 3rd~ed.\hskip 1em
  plus 0.5em minus 0.4em\relax Wiley, 1996.

\bibitem{Staple:2004Spectrum}
G.~Staple and K.~Werbach, ``{The End of Spectrum Scarcity},'' \emph{IEEE
  Spectrum}, vol.~41, no.~3, pp. 48 -- 52, Mar. 2004.

\bibitem{Xiaoxin:2004HierarchicalCell}
X.~Wu, B.~Murherjee, and D.~Ghosal, ``{Hierarchical architectures in the
  third-generation cellular network},'' \emph{IEEE Wireless Communications},
  vol.~11, no.~3, pp. 62 -- 71, June 2004.

\bibitem{Correia:1997kz}
L.~Correia, J.~Reis, and P.~Frances, ``{Analysis of the average power to
  distance decay rate at the 60 GHz band},'' in \emph{Proc. IEEE VTC Fall},
  1997, pp. 994--998.

\bibitem{Marcus:2005kv}
M.~Marcus and B.~Pattan, ``{Millimeter wave propagation: spectrum management
  implications},'' \emph{IEEE Microwave Magazine}, vol.~6, no.~2, pp. 54--62.

\bibitem{BenDor:2011fb}
E.~Ben-Dor, T.~Rappaport, Y.~Qiao, and S.~J. Lauffenburger, ``{Millimeter-wave
  60 GHz Outdoor and Vehicle AOA Propagation Measurements using a Broadband
  Channel Sounder},'' Dec. 2011.

\bibitem{Xu:2002vu}
H.~Xu, V.~Kukshya, and T.~S. Rappaport, ``{Spatial and temporal characteristics
  of 60-GHz indoor channels},'' \emph{IEEE Journal on Selected Areas in
  Communications}, vol.~20, no.~3, Apr. 2002.

\bibitem{tse2005fundamentals}
D.~Tse and P.~Viswanath, \emph{{Fundamentals of Wireless Communication}}.\hskip
  1em plus 0.5em minus 0.4em\relax Cambridge University Press, 2005.

\bibitem{springerlink:10.1007/978-94-007-0662-0_3}
Y.~Yu, P.~G. Baltus, A.~H. van Roermund, Y.~Yu, P.~G. Baltus, and A.~H.~M.
  Roermund, \emph{Integrated 60 GHz RF Beamforming in CMOS}.\hskip 1em plus
  0.5em minus 0.4em\relax Springer, 2011.

\bibitem{mailloux:2005phased}
R.~Mailloux, \emph{{Phased Array Antenna Handbook}}.\hskip 1em plus 0.5em minus
  0.4em\relax Artech House, 2005.

\bibitem{Love:2003bc}
D.~Love, R.~J. Heath, and T.~Strohmer, ``{Grassmannian beamforming for
  multiple-input multiple-output wireless systems},'' \emph{IEEE Transactions
  on Information Theory}, vol.~49, no.~10, pp. 2735--2747, 2003.

\bibitem{HongZhang:2010wf}
H.~Zhang and U.~Madhow, ``{Statistical modeling of fading and diversity for
  outdoor 60 GHz channels},'' in \emph{Proc. mmCom '10: ACM Int. Workshop on
  mmWave Comm.}, Sep. 2010.

\bibitem{Proakis}
J.~G. Proakis, \emph{{Digital Communications}}, 4th~ed.\hskip 1em plus 0.5em
  minus 0.4em\relax McGraw Hill, 2000.

\bibitem{Sofotasios:MarcumQfunc}
P.~Sofotasios and S.~Freear, ``{Novel expressions for the Marcum and one
  dimensional Q-functions},'' 2010.

\bibitem{Pawula}
R.~Pawula, ``{Relations between the Rice Ie-function and the Marcum Q-function
  with applications to error rate calculations},'' \emph{Electronics Letters},
  vol.~31, no.~24, pp. 2078--2080, 1995.

\bibitem{Wang:2009jr}
J.~Wang, Z.~Lan, C.-S. Sum, C.-W. Pyo, J.~Gao, T.~Baykas, A.~Rahman, R.~Funada,
  F.~Kojima, I.~Lakkis, H.~Harada, and S.~Kato, ``{Beamforming Codebook Design
  and Performance Evaluation for 60GHz Wideband WPANs},'' in \emph{IEEE
  Vehicular Technology Conference VTC 2009-Fall}, 2009.

\bibitem{Mailloux:2007wm}
R.~Mailloux, ``{Subarray Technology for Large Scanning Arrays},'' in
  \emph{European Conf. Antennas Propagation}, Nov. 2007.

\bibitem{Kinsey:1997eq}
R.~Kinsey, ``{Phased array beam spoiling technique},'' in \emph{Proc. IEEE
  Antennas and Propagation Society Intl. Symp.}, 1997.

\bibitem{MuhiEldeen:2010jl}
Z.~Muhi-Eldeen, L.~Ivrissimtzis, and M.~Al-Nuaimi, ``{Modelling and
  measurements of millimetre wavelength propagation in urban environments},''
  \emph{IET Microwaves, Antennas {\&} Propagation}, vol.~4, no.~9, pp.
  1300--1309, 2010.

\bibitem{Davenport:1961bb}
A.~Davenport, ``{The Application of Statistical Concepts to The Wind Loading of
  Structures.}'' \emph{ICE Proceedings}, vol.~19, no.~4, pp. 449--472, Jan.
  1961.

\bibitem{Wieringa:1986roughness}
J.~Wieringa, ``Roughness-dependent geographical interpolation of surface wind
  speed averages,'' \emph{Quarterly Journal of the Royal Meteorological
  Society}, vol. 112, no. 473, pp. 867--889, 1986.

\bibitem{Yam:1997vx}
L.~Yam, T.~Leung, D.~Li, and K.~Xu, ``{Use of ambient response measurements to
  determine dynamic characteristics of slender structures},'' \emph{Engineering
  structures}, vol.~19, no.~2, pp. 145 -- 150, 1997.

\bibitem{Caracoglia:2007}
L.~Caracoglia and N.~Jones, ``{Numerical and experimental study of vibration
  mitigation for highway light poles},'' \emph{Engineering structures},
  vol.~29, no.~5, pp. 821--–831, 2007.

\bibitem{shinozuka:1991simulation}
M.~Shinozuka and G.~Deodatis, ``Simulation of stochastic processes by spectral
  representation,'' \emph{Applied Mechanics Reviews}, vol.~44, p. 191, 1991.

\end{thebibliography}

\end{document}